\crefname{enumi}{}{}
\crefname{equation}{}{}
\colorlet{savedcolor}{.} \color{blue} \begingroup \ttfamily \bigskip \smallskip  \noindent \underline{Supplementary details:} \newline \newline \footnotesize }{\endgroup \smallskip \color{savedcolor}}
\newcommand{\R}{\mathbb{R}}
\numberwithin{equation}{section}
\newcommand{\E}{\mathbb{E}}
\newcommand{\PR}{\mathbb{P}}
\DeclareMathOperator{\vvec}{vec}
\DeclareMathOperator{\Cov}{Cov}
\newtheorem{thm}{Theorem}[section]
\newtheorem{definition}[thm]{Definition}
\newtheorem{lemma}[thm]{Lemma}
\newtheorem{ass}[thm]{Assumption}
\theoremstyle{remark}
\newtheorem{remark}[thm]{Remark}
\begin{document}

\title[Data driven fitting of multivariate Vasicek models]{Data driven modeling of multiple interest rates with generalized Vasicek-type models}

\author[Ilmonen]{Pauliina Ilmonen}
\address{Aalto University School of Science, Department of Mathematics and Systems Analysis, PO Box 11100, 00076 Aalto, Finland}
\email{pauliina.ilmonen@aalto.fi}

\author[Laurikkala]{Milla Laurikkala}
\address{Aalto University School of Science, Department of Mathematics and Systems Analysis, PO Box 11100, 00076 Aalto, Finland}
\email{milla.laurikkala@aalto.fi}

\author[Ralchenko]{Kostiantyn Ralchenko}
\address{University of Vaasa, School of Technology and Innovations,
P.O. Box 700, Vaasa, FIN-65101, Finland}
\address{Taras Shevchenko National University of Kyiv, Department of Probability, Statistics and Actuarial Mathematics, Volodymyrska St., Kyiv,
01601, Ukraine}

\email{kostiantyn.ralchenko@uwasa.fi}

\author[Sottinen]{Tommi Sottinen}
\address{University of Vaasa, School of Technology and Innovations,
P.O. Box 700, Vaasa, FIN-65101, Finland}

\email{tommi.sottinen@iki.fi}

\author[Viitasaari]{Lauri Viitasaari}
\address{Aalto University School of Business, Department of Information and Service Management, PO Box 11110, 00076 Aalto, Finland}
\email{lauri.viitasaari@aalto.fi}

\keywords{stochastic interest rate models, Vasicek model, parameter estimation, consistency, limiting distribution, Ornstein-Uhlenbeck process}
\subjclass[2020]{91G30,
62M09,
60G10}
\date{September 2, 2025}

\thanks{
P. Ilmonen and M. Laurikkala gratefully acknowledge support from the Research Council of Finland via Finnish Centre of Excellence in Randomness
and Structures 
(decision number~346308). M. Laurikkala gratefully acknowledges support from Emil Aaltonen Foundation (grant number 250110 K1). K. Ralchenko gratefully acknowledges support from the Research Council of Finland (decision number 367468)}

\begin{abstract}
The Vasicek model is a commonly used interest rate model, and there exist many extensions and generalizations of it.
However, most generalizations of the model are either univariate or assume the noise process to be Gaussian, or both.
In this article, we study a generalized multivariate Vasicek model that allows simultaneous modeling of multiple interest rates while making minimal assumptions.
In the model, we only assume that the noise process has stationary increments with a suitably decaying autocovariance structure.
We provide estimators for the unknown parameters and prove their consistencies.
We also derive limiting distributions for each estimator and provide theoretical examples.
Furthermore, the model is tested empirically with both simulated data and real data.
\end{abstract}

\allowdisplaybreaks

\maketitle

{\footnotesize\tableofcontents}

\section{Introduction}
Interest rates are important for economics as they affect the consumption of households and businesses.
Moreover, they can be used as economic indicators since their values often change in accordance with economic shifts.
However, the changes in interest rates are influenced by multiple different factors, and thus interest rates themselves are considered to be stochastic and can be modeled as random processes.
The first model utilizing this idea was the one-factor short-rate interest model introduced by both Merton \cite{merton1973theory} and Black and Scholes \cite{black1973pricing}.
In this model, the interest rate is described using a stochastic differential equation featuring a Brownian motion with drift.
Building on this model, Vasicek \cite{vasicek1977equilibrium} modeled interest rates as an Ornstein-Uhlenbeck process and included mean reversion in the model.

Inspired by \cite{black1973pricing, merton1973theory} and \cite{vasicek1977equilibrium}, many extensions and different versions of interest rate models have been proposed in the literature.
Langetieg \cite{langetieg1980multivariate} proposed a multivariate version of the Vasicek model with a vector of independent Brownian motions as the noise process.
Cox, Ingersoll and Ross (CIR) \cite{cox1985theory} used a similar form to Vasicek but defined the interest rate as a square-root process, disallowing negative values.
The CKLS model \cite{chan1992empirical} combines all these models into one, in which the parameters can be chosen to suit the data at hand.
Hull and White \cite{hull1990pricing} extended the Vasicek and CIR models to allow time-dependent parameters.
This model is sometimes referred to as an extended Vasicek model. 

Most generalizations of the Vasicek model are univariate.
For example, the univariate fractional Vasicek model with univariate fractional Brownian motion as the noise process was introduced already in 1998 by Comte \cite{comte1998long} and its statistical properties were studied in \cite{xiao2019frac1} and \cite{xiao2019frac2}.
Nourdin and Diu Tran \cite{nourdin2019statistical} generalized the univariate fractional Vasicek model further by using Hermite processes as noise.

In this article, we study the generalized multivariate Vasicek model
\begin{equation}
\label{eq:model_intro}
dr_t = \Theta(b-r_t)dt + \sigma dX_t,
\end{equation}
where $X$ is a random process with stationary increments and $r$ is a vector of interest rates.
This model is inspired by the vector-valued generalized Ornstein-Uhlenbeck process studied in \cite{Marko1}, which can be obtained from \eqref{eq:model_intro} by setting $b=0$ and $r_0 = \int_{-\infty}^0 e^{\Theta s}dX_s$.
The generalized multivariate Vasicek model provides a framework to simultaneously model multiple interest rates.
The proposed model is very flexible.
As shown in \cite{Marko1}, every stationary process with continuous sample paths can be viewed as a vector-valued generalized Ornstein-Uhlenbeck process. 
As such, the generalized Vasicek model \eqref{eq:model_intro} covers all interest rates that are (almost) stationary around a mean level $b$.

The rest of the article is organized as follows.
In Section \ref{sec:main}, we introduce the model and present our main results. In particular, we derive estimators for parameters and show that one obtains consistency and asymptotic distributions under rather minimal assumptions.
We illustrate the practical applicability of our approach by presenting examples with simulated data in Section \ref{sec:sim_ex} and a real data example in Section \ref{sec:real_ex}.
Technical proofs are presented in Section \ref{sec:proof}.

\section{Main results}
\label{sec:main}
We consider the generalized multivariate Vasicek model for interest rates $r^{\Theta, b} = \left(r^{\Theta, b}_1,\ldots,r^{\Theta, b}_d\right)$ given by 
\begin{equation}
\label{eq:model}
dr^{\Theta,b}_t = \Theta(b-r^{\Theta,b}_t)dt + \sigma dX_t.
\end{equation}
The initial value $r_0\in \R^d$ is assumed to be a square-integrable random vector. 
The vector $b\in \R^d$ represents the long-term means of the interest rates and $\Theta \in \R^{d\times d}$ is a positive definite matrix modeling the mean-reversion effect and interactions between the interest rates.
The vector $X =(X_t)_{t\geq 0}\in \R^d$ is a square-integrable centered $d$-dimensional noise process with stationary increments with the vector $X_0=0$.
The matrix $\sigma \in \R^{d\times d}$ is a positive definite diagonal matrix describing the size of the noise.

\begin{remark}
\label{rmk:diagonal}
    Without loss of generality, we can assume that the vector $X$ in \eqref{eq:model} has uncorrelated components as for any $X$, we can write $\sigma X = A^{-1}\Tilde{X}$ with $A \in \R^{d\times d}$, where $\Tilde{X}$ has uncorrelated components.
    An invertible matrix $A$ always exists as it can be chosen as $A = W$ with $W$ taken from the eigendecomposition of the cross-covariance matrix $\Sigma = WDW^\top$ of $\sigma X$, where zero eigenvalues have been cut off.
    The matrix $W$ is unitary so $A^{-1} = W^\top = A^\top$.
    Now, multiplying both sides of \eqref{eq:model} by $A$ from the left, we obtain
    \begin{align*}
        d\left(Ar^{\Theta,b}_t\right) = A\Theta A^\top A(b-r^{\Theta,b}_t)dt + d\Tilde{X_t},
    \end{align*}
    which implies
    \begin{equation*}
        d\Tilde{r}^{\Theta,b}_t = \Tilde{\Theta} (\Tilde{b}-\Tilde{r}^{\Theta,b}_t)dt + d\Tilde{X_t},
    \end{equation*}
    where $\Tilde{r}^{\Theta,b}_t = Ar^{\Theta,b}_t$, $\Tilde{b}=Ab$ and  $\Tilde{\Theta} = A\Theta A^\top$.
    The new matrix $\Tilde{\Theta}$ is positive definite, since $\Theta = LL^\top$ is positive definite and $\Tilde{\Theta} = A L L^\top A^\top = (AL) (AL)^\top$. 
\end{remark}
In the sequel, we assume that the vector $X$ has uncorrelated components and finite fourth moments.
Moreover, we use the notation $\Vert \cdot \Vert$ for the $L^2$-norm.
In addition, we drop the superscript from $r^{\Theta, b}$ and simply write $r$ whenever confusion cannot arise, and we denote the stationary solution corresponding to $b=0$ and $r_0 = \int_{-\infty}^0 e^{\Theta s}dX_s$ by $U=(U_t)_{t\geq 0}$, given by 
\begin{equation}
\label{eq:U}
    U_t = e^{-\Theta t}\int_{-\infty}^t e^{\Theta s}dX_s,
\end{equation} 
where the integral can be defined via integration-by-parts, see \cite{Marko1}.
The covariance matrix of $U_t$ with time-lag $t$ is denoted by $\gamma(t) = \E[U_tU_0^\top] \in \R^{d\times d}$.
Note that $\gamma(-t) = \gamma(t)^\top$.
For the cross-covariance matrix $\Cov(X_t)$, we use the short notation $V(t)$. 

As in \cite{Marko1}, we use the notations
\begin{align}
    B_t &= \int_0^t \left(\gamma(s)-\gamma(s)^\top \right)ds,\nonumber\\
    C_t &= \int_0^t \int_0^t \gamma(u-s)duds, \label{eq:B-D_def}\\
    D_t &= \sigma V(t)\sigma^\top- \Cov(U_t-U_0)\nonumber.
\end{align}
Note that we can express $\Cov(U_t-U_0)$ with $\gamma$ as $\Cov(U_t-U_0) = 2\gamma(0)-\gamma(t)-\gamma(-t)$.
It was shown in \cite[Theorem 2]{Marko1} that the matrix $\Theta$ satisfies, for any $t\geq 0$, the continuous-time algebraic Riccati equation (CARE)
$$
B_t^\top \Theta + \Theta B_t - \Theta C_t \Theta + D_t = 0. 
$$

In the sequel, we assume that a sample path of the process $r$ is observed continuously on the interval $[0,T]$. Our objective is to estimate the unknown parameters $\Theta$, $b$, and $\sigma$. We assume that the covariance matrix $V(t)$ of the process $X$ 
 is known on $[0,t]$ for some fixed $t$. Note that this is not restrictive, as the uncertainty of the noise is encoded in the unknown $\sigma$.
 
 Estimation of $\sigma$ requires densely spaced discrete observations over a fixed interval. From a theoretical perspective, under a continuous-time observation scheme, one effectively has infinitely many such observations. However, in practice, data are always available only in discrete form. For this reason, we prefer to retain the explicit dependence on the mesh size in our estimators; see Remark 2.6 for further details.

\begin{definition}
We define an estimator for the unknown mean level $b$ as
\begin{equation*}
    \hat b_T =  \frac{1}{T}\int_0^T r_s ds,
\end{equation*}
and an estimator for the unknown covariance matrix $\gamma$ with lag $s$ as
\begin{equation*}
    \hat{\gamma}_{r,T}(s) = \frac{1}{T}\int_0^T r_{s+v}r_v^\top dv - \frac{1}{T^2} \int_0^T r_vdv \int_0^T r_v^\top dv.
\end{equation*}
For $N+1$ equally spaced observations $r_{t_0}, ..., r_{t_{N}}$ on the interval $[0,1]$, we define an estimator 
for $\sigma \sigma^\top$ as
\begin{equation*}
        \hat \sigma \hat \sigma^\top = V(1/N)^{-1} \frac{1}{N}\sum_{k=0}^{N-1} \left(r_{t_{k+1}} - r_{t_k}\right)\left(r_{t_{k+1}} - r_{t_k}\right)^\top,
    \end{equation*}
    provided that the cross-covariance matrix $V(1/N)$ is invertible.
An estimator $\hat\sigma$ for $\sigma$ is then defined as the square root of $\hat \sigma \hat \sigma^\top$.

Finally, an estimator for $\Theta$ is defined as (any) positive definite solution of 
\begin{equation*}
\hat B_t^\top \hat\Theta + \hat\Theta \hat B_t - \hat\Theta \hat C_t \hat\Theta + \hat D_t = 0,
\end{equation*}
provided it exists.
The estimators $\hat B_t, \hat C_t$ and $\hat D_t$ are given by \eqref{eq:B-D_def} with $\gamma(s)$ replaced by $\hat \gamma_{r,T}(s)$ and $\sigma$ replaced by $\hat \sigma$.
\end{definition}

The estimators for $b$ and $\gamma$ depend on the length of the observed time series $T$ (or $T+s$), and the estimates get more accurate if a longer series is observed. Note that the estimators $\hat{B}_t,\hat{C}_t$, and $\hat{D}_t$ depend on a chosen time instance $t$. On a theoretical level, the choice of $t$ does not play a role as long as we know the value $V(t)$. However, in practice, the choice of $t$ affects the performance of the estimators. For small $t$, the matrices $B_t$ and $C_t$ are effectively zero matrices, and this may cause inaccuracies. For large $t$, the estimation of the matrices becomes computationally more expensive and, at the same time, more inaccurate since then one needs to estimate $\gamma(s)$ for larger lags $s$.
The estimator of $\sigma \sigma^\top$ depends on the number of observations on a fixed time interval, and for simplicity, we use the $[0, 1]$ interval here.
To get a better estimate of $\sigma \sigma^\top$, a larger number of observations on this interval is required so that the contribution of the drift term in \eqref{eq:model} vanishes.
Therefore, a good estimate for $\Theta$ requires a long time series and densely observed data.

For obtaining asymptotical results, we make the following assumption.
\begin{ass}
\label{ass:consistency}
$ $
\begin{enumerate}
    \item We have $
    \Vert \gamma(t) \Vert \rightarrow 0$ as $t\rightarrow \infty$.
\item The covariance estimator $\hat \gamma_{U,T}(s)$ based on the stationary process $U$
\begin{equation*}
    \hat{\gamma}_{U,T}(s) = \frac{1}{T}\int_0^T U_{s+v}U_v^T dv
\end{equation*}
satisfies $$\Vert \hat \gamma_{U,T}(\cdot) - \gamma(\cdot)\Vert_{L^\infty(0,t)} \rightarrow 0$$
in probability as $T\rightarrow \infty$, where $\Vert \cdot\Vert_{L^\infty(0,t)}$ denotes the supremum norm on the interval $[0,t]$.
\item For large $N$, the cross-covariance matrix $V(1/N) \in \R^{d\times d}$ is invertible and satisfies 
    \begin{equation*}
        \frac{\Vert V(1/N)^{-1}\Vert}{N^2} = \frac{1}{N^2 \min_i \E[X_{{1/N}, i}^2]} \to 0 \text{ as } N\to \infty.
    \end{equation*}
    \item We have
\begin{equation*}
        V(1/N)^{-1} \frac{1}{N^2} \sum_{k=1}^N \sum_{j=1}^N f(j, k, N) V(1/N)^{-1}\rightarrow 0,
    \end{equation*}
    where 
    \begin{align*}
        f(j, k, N) &= \Cov(\Delta X_{t_k} \Delta X_{t_k}^\top, \Delta X_{t_j} \Delta X_{t_j}^\top) \\
        &= \mathbb{E}\left[\left( \Delta X_{t_k} \Delta X_{t_k}^\top - \E[\Delta X_{t_k} \Delta X_{t_k}^\top] \right) \left(\left( \Delta X_{t_j} \Delta X_{t_j}^\top - \E[\Delta X_{t_j} \Delta X_{t_j}^\top] \right)\right)^\top \right]
    \end{align*}
    with $\Delta X_{t_k} = X_{t_k} - X_{t_{k-1}}$
    and $t_k = \frac{k}{N}$ with $k=0,1,\ldots,N$.
\end{enumerate}
\end{ass}

The first part of Assumption \ref{ass:consistency} (stating that the true covariance of the process $U$ decays to zero), together with Lemma \ref{lemma:mean}, allows us to consistently estimate $b$ from observations of $r$.
This is also required for covariance estimation.
The second part of Assumption \ref{ass:consistency} (uniform consistency of the covariance estimator), together with Lemma \ref{lemma:gamma}, is needed for the estimators of the coefficient matrices $B, C$ and $D$ to be consistent.

The third and the fourth part of Assumption \ref{ass:consistency} are related to the estimation of $\sigma \sigma^\top$.
The third part ensures that we can estimate the quadratic variation of the random process $X$ with the quadratic variation of $r$. 
Note that the third part implies that $\E[X_{1/N,i}^2]$ decays, for all $i=1,\ldots, d$, to zero slower than $N^{-2}$, meaning that none of the components $X_i$ are smooth (in the mean-square sense). 
This implies that the drift term in \eqref{eq:model} is asymptotically negligible in the computation of the quadratic variation. 
Finally, the fourth part ensures that the quadratic variation of $X$ is asymptotically deterministic and behaves as its expectation. 

We now present our main results.
Theorem \ref{thm:consistency} is related to consistency of the estimators and Theorem \ref{thm:CLT} is related to limiting distributions.
\begin{thm}
\label{thm:consistency}
    Let Assumption \ref{ass:consistency} hold. Then, as $T \to \infty$ and $N \to \infty$, we have that
$$
\hat b_T \to b, \ \hat \sigma=\hat \sigma_N \to \sigma \text{ and }\hat\Theta = \hat\Theta_{T, N} \to \Theta
$$
in probability.
\end{thm}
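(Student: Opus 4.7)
The plan is to treat the three estimators in the order $\hat b_T$, then $\hat \sigma_N$, and finally $\hat \Theta_{T,N}$, making explicit use of the auxiliary Lemmas \ref{lemma:mean} and \ref{lemma:gamma} referenced in the text following the assumption.

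First, for $\hat b_T$, I would observe that $r_s - b$ satisfies the SDE $d(r_s-b) = -\Theta (r_s-b)\,ds + \sigma\,dX_s$ with initial value $r_0 - b$, so it differs from the stationary solution $U$ only by an exponentially decaying transient of the form $e^{-\Theta s}(r_0 - b - U_0)$. Consequently it is enough to show $\frac{1}{T}\int_0^T U_s\,ds \to 0$ in $L^2$, and the second moment of this average reduces by Fubini to a Cesàro-type average of $\gamma$, which vanishes by part (i) of Assumption \ref{ass:consistency} ($\Vert \gamma(t)\Vert \to 0$). This is precisely the content of Lemma \ref{lemma:mean}.

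For $\hat \sigma_N$, I would decompose each observed increment as
\begin{equation*}
r_{t_{k+1}} - r_{t_k} = \sigma\, \Delta X_{t_k} + \int_{t_k}^{t_{k+1}} \Theta(b - r_s)\,ds
\end{equation*}
and expand the resulting outer product. By Remark \ref{rmk:diagonal} we may assume $X$ has uncorrelated components, so $V(1/N)$ and the diagonal matrix $\sigma$ commute; a direct computation then gives
\begin{equation*}
\E\!\left[V(1/N)^{-1}\,\tfrac{1}{N}\sum_{k=0}^{N-1} \sigma\,\Delta X_{t_k}\Delta X_{t_k}^\top \sigma^\top\right] = \sigma \sigma^\top,
\end{equation*}
and part (iv) of Assumption \ref{ass:consistency} forces the variance of this averaged quantity to vanish. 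The drift-squared and drift-noise cross contributions are, after normalization, of order $\Vert V(1/N)^{-1}\Vert / N^2$, which vanishes by part (iii). Continuous mapping applied to the positive definite square root then yields $\hat \sigma_N \to \sigma$ in probability.

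For $\hat \Theta_{T,N}$, I would combine part (ii) of Assumption \ref{ass:consistency} with Lemma \ref{lemma:gamma} to obtain $\sup_{s\in[0,t]} \Vert \hat \gamma_{r,T}(s) - \gamma(s)\Vert \to 0$ in probability, which by continuity of the integrals in \eqref{eq:B-D_def} immediately gives $\hat B_t \to B_t$ and $\hat C_t \to C_t$. Combined with $\hat \sigma_N \to \sigma$, this also yields $\hat D_t \to D_t$. The final step is to transfer convergence of $(\hat B_t, \hat C_t, \hat D_t)$ to convergence of the positive definite solution of the CARE; I would argue by subsequences: the solutions $\hat \Theta_{T,N}$ are bounded by the magnitudes of the (converging) coefficients, so along any subsequence we can extract a further subsequence converging to some positive definite $\Theta^\ast$, and passing to the limit in $\hat B_t^\top\hat\Theta + \hat\Theta\hat B_t - \hat\Theta\hat C_t\hat\Theta + \hat D_t = 0$ identifies $\Theta^\ast$ as a positive definite solution of the true CARE; uniqueness of this solution from \cite[Theorem 2]{Marko1} forces $\Theta^\ast = \Theta$.

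I expect the CARE step to be the main obstacle, since one must first establish a uniform-in-sample bound on the positive definite solutions $\hat \Theta_{T,N}$ before invoking uniqueness. A secondary technical concern is the $\hat \sigma_N$ analysis, where the careful cancellation between the preconditioner $V(1/N)^{-1}$, the diagonality secured by Remark \ref{rmk:diagonal}, and the scaling conditions (iii)--(iv) must be kept track of for every term produced by the outer-product expansion.
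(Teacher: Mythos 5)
Your treatment of $\hat b_T$ and $\hat\sigma_N$ follows the paper's route essentially verbatim: the transient-plus-stationary decomposition and the Ces\`aro argument are exactly Lemma \ref{lemma:mean}, and the outer-product expansion of the increments, the $\Vert V(1/N)^{-1}\Vert/N^2$ bound on the drift terms via part (iii), and the variance control via part (iv) are exactly Lemma \ref{lemma:sigma}, followed by the same continuous-mapping step for the square root. The consistency of $\hat B_t$, $\hat C_t$, $\hat D_t$ from Lemma \ref{lemma:gamma} and Assumption \ref{ass:consistency}(ii) also matches.

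The divergence, and the gap, is in the final CARE step. The paper does not argue by subsequences; it invokes \cite[Theorem 3]{Marko1}, which is precisely the stability result transferring convergence of the coefficient matrices to convergence of positive definite solutions of the Riccati equation. Your replacement argument rests on two claims that you do not establish. First, the a priori bound on $\hat\Theta_{T,N}$: you assert the solutions are ``bounded by the magnitudes of the coefficients,'' but solutions of $B^\top\Theta+\Theta B-\Theta C\Theta+D=0$ blow up as $C$ degenerates (already in the scalar case $\theta=(b\pm\sqrt{b^2+cd})/c$), so the bound requires $\hat C_t$ to stay uniformly away from singularity, which must be argued from $\hat C_t\to C_t$ and invertibility of $C_t$ --- and the latter is itself a nontrivial hypothesis. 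Second, uniqueness: you attribute uniqueness of the positive definite solution to \cite[Theorem 2]{Marko1}, but that theorem (as used in this paper) only states that the true $\Theta$ \emph{satisfies} the CARE; the paper's own definition of $\hat\Theta$ as ``\emph{any} positive definite solution, provided it exists'' signals that uniqueness is not being assumed, and without it the subsequence argument cannot identify the limit. Both issues are exactly what \cite[Theorem 3]{Marko1} is imported to handle, so as written your proof of the $\hat\Theta$ part is incomplete.
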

\begin{thm}
\label{thm:CLT}
Let $l_1(T) (\hat b_T - b) \xrightarrow{d} Y_1$, $l_2(T) (\hat \gamma_{U,T}(s) - \gamma(s)) \xrightarrow{d} Y_2(s)$ uniformly in $s\in[0,t]$, and let $l_3(N(T)) (\hat \sigma \hat \sigma^\top - \sigma \sigma^\top) \xrightarrow{d} Y_3$ for some $N(T) \to \infty$ as $T\to \infty$. 

Then, as $T\to \infty$, we have that:
\begin{enumerate}
    \item 
    \begin{equation*}
    l_\Theta(T)\vvec(\hat C_t - C_t, \hat B_t - B_t, \hat D_t - D_t) \xrightarrow{d} L_1(\vvec(Y_1 Y_1^\top, Y_2, Y_3)),
\end{equation*}
where $l_\Theta(T) = \min(l_1^2(T), l_2(T),l_3(N(T))$ and $L_1$ is a linear operator depending on the relations between $l_1^2(T),l_2(T)$, and $l_3(N(T))$.
\item     \begin{equation*}
        l_\Theta(T) \vvec(\hat \Theta - \Theta) \xrightarrow{d} L_2(L_1(\vvec(Y_1 Y_1^\top, Y_2, Y_3))),
    \end{equation*}
    where $L_2$ is a linear operator depending on the parameters $\Theta$, $b$ and $\sigma$, $t$ and the cross-covariance of $X$.
\end{enumerate}

\end{thm}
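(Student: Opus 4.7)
My plan is to reduce the three errors $\hat B_t - B_t$, $\hat C_t - C_t$, $\hat D_t - D_t$ to the ``building-block'' errors with prescribed limits---namely $\hat b_T - b$, $\hat\gamma_{U,T}(s) - \gamma(s)$, and $\hat\sigma\hat\sigma^\top - \sigma\sigma^\top$---and then to linearise the algebraic Riccati equation in order to transfer the joint limit to $\hat\Theta - \Theta$.

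The first task is a decomposition of $\hat\gamma_{r,T}$. Writing $\tilde r_v = r_v - b$ and expanding $r_{s+v} r_v^\top = (\tilde r_{s+v}+b)(\tilde r_v+b)^\top$ inside the definition of $\hat\gamma_{r,T}(s)$, the cross terms in $b$ cancel against the corresponding terms in $\hat b_T \hat b_T^\top$ up to an $O(s/T)$ boundary contribution, so that
\[
\hat\gamma_{r,T}(s) - \gamma(s) \;=\; \bigl(\hat\gamma_{U,T}(s) - \gamma(s)\bigr) \;-\; (\hat b_T - b)(\hat b_T - b)^\top \;+\; o_P(l_\Theta(T)^{-1})
\]
uniformly for $s\in[-t,t]$; the replacement of $\tilde r$ by the stationary $U$ inside the sample integral is justified by $\tilde r_v - U_v = e^{-\Theta v}(\tilde r_0 - U_0)$ decaying exponentially under Assumption~\ref{ass:consistency}(i). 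Substituting this expansion into \eqref{eq:B-D_def} with $\gamma$ replaced by $\hat\gamma_{r,T}$ reveals the key cancellations: because the rank-one matrix $(\hat b_T - b)(\hat b_T - b)^\top$ is symmetric, it drops out of $\hat B_t - B_t$ (whose integrand $\gamma(s) - \gamma(s)^\top$ is antisymmetric) and out of the $\gamma$-piece of $\hat D_t - D_t$ (whose combination $-(2\gamma(0)-\gamma(t)-\gamma(-t))$ is a symmetric combination), while it survives in $\hat C_t - C_t$ with coefficient $-t^2$. The $\sigma$-piece $\hat\sigma V(t)\hat\sigma^\top - \sigma V(t)\sigma^\top$ is a smooth function of $\hat\sigma\hat\sigma^\top - \sigma\sigma^\top$ and, by the delta method, converges to a linear image of $Y_3$ (multiplication by $V(t)$ in the diagonal case). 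Applying the continuous mapping theorem to the jointly convergent triple then yields part~(i), with $L_1$ weighting the three contributions by $\lim l_\Theta(T)/l_1^2(T)$, $\lim l_\Theta(T)/l_2(T)$, $\lim l_\Theta(T)/l_3(N(T))\in\{0,1\}$---precisely the claimed dependence on the relations between the rates.

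For part~(ii), I would subtract the two CAREs for $\Theta$ and $\hat\Theta$, introduce $\Delta\Theta = \hat\Theta - \Theta$, $\Delta B = \hat B_t - B_t$, and similarly for $\Delta C, \Delta D$, then use the consistency established in Theorem~\ref{thm:consistency} to discard products of two differences (which are of lower order than $l_\Theta(T)^{-1}$). Collecting the first-order terms produces the Sylvester identity
\[
(B_t^\top - \Theta C_t)\,\Delta\Theta + \Delta\Theta\,(B_t - C_t\Theta) \;=\; \Theta\,\Delta C\,\Theta - \Theta\,\Delta B - \Delta B^\top \Theta - \Delta D,
\]
which, after vectorisation, becomes $\bigl[I \otimes (B_t^\top - \Theta C_t) + (B_t - C_t\Theta)^\top \otimes I\bigr]\,\vvec(\Delta\Theta) = \vvec(\text{RHS})$. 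The operator $L_2$ is then defined as the inverse of this coefficient matrix composed with the linear map given by the right-hand side.

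The hard part will be verifying invertibility of that Sylvester operator, i.e.\ that $B_t^\top - \Theta C_t$ and $-(B_t - C_t\Theta)$ share no eigenvalue. This operator is precisely the Fr\'echet derivative of $F(\Theta;B,C,D) = B^\top\Theta + \Theta B - \Theta C\Theta + D$ with respect to $\Theta$ at the true parameters, so its invertibility is the infinitesimal counterpart of uniqueness of the positive definite CARE solution established in \cite{Marko1}, and should follow from the implicit function theorem applied at the stabilising $\Theta$. The remaining technical points---controlling the quadratic remainder in the linearisation uniformly on an event of probability tending to one, and absorbing the $o_P(l_\Theta(T)^{-1})$ non-stationarity correction when passing from the in-probability expansion above to the joint distributional limit---are routine given Theorem~\ref{thm:consistency} and the hypotheses on $l_1, l_2, l_3$.
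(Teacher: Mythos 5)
Your proposal follows essentially the same route as the paper: part (i) rests on the decomposition $\hat\gamma_{r,T}(s)-\gamma(s) = (\hat\gamma_{U,T}(s)-\gamma(s)) - (\hat b_T-b)(\hat b_T-b)^\top + O_P(1/T)$ (the paper's Lemma \ref{lemma:gamma}), the observation that the symmetric rank-one term cancels in $\hat B_t - B_t$ and in the $\gamma$-part of $\hat D_t - D_t$ but survives in $\hat C_t - C_t$, and the continuous mapping theorem, while part (ii) is a first-order expansion of the CARE solution map --- your explicit Sylvester linearisation is precisely the Fr\'echet derivative whose inverse the paper invokes abstractly as $\nabla f$ in the Delta method. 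Two minor remarks: you make explicit the invertibility of that Sylvester operator, which the paper leaves implicit in the assertion that $f$ is continuously differentiable (deferring to \cite{Marko1}); and your coefficient $-t^2$ on $(\hat b_T-b)(\hat b_T-b)^\top$ in $\hat C_t - C_t$ is the one consistent with Lemma \ref{lemma:gamma}, whereas Lemma \ref{lemma:gamma_lim} and the paper's $L_1$ carry a $+$ sign there.
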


\begin{remark}
Under continuous observations, the mesh (size of $N$) used in estimating $\sigma$ can, in theory, be chosen freely. 
Thus one can always choose $N=N(T)$ to increase fast enough so that the difference $\hat \sigma \hat \sigma^\top - \sigma \sigma^\top$ does not contribute to the limiting distribution. However, in practice, one has to rely on discrete (dense) observations. The above formulation provides information on the role of the estimation of $\sigma$ in that case. 
\end{remark}
\subsection{Examples and discussion}
Suppose first that $X$ is a Gaussian process. Then $U$ is Gaussian as well. Consider first the case $\Vert\gamma(t)\Vert \sim t^{2H-2}$ for $0<H<1$ as $t\to \infty$. Then Assumption \ref{ass:consistency} parts (i)-(ii) are valid, see \cite{Marko1}. For the rate functions $l_1(T)$ and $l_2(T)$ in Theorem \ref{thm:CLT}, note that
\begin{align*}
    \left\Vert\mathrm{Cov}(\hat b_{U, T})\right\Vert &= \left\Vert\mathrm{Cov}\left(\frac{1}{T} \int_0^T U_s ds \right)\right\Vert \sim \frac{1}{T} \int_1^T s^{2H-2} ds,
\end{align*}
and hence the rate function $l_1(T)$ for $\hat b_{U,T}$ is
\begin{align*}
    l_1(T) = \begin{cases}
        \sqrt{T} &\text{ if } 0<H<\frac{1}{2},\\
        \sqrt{\frac{T}{\log T}} &\text{ if } H=\frac{1}{2},\\
        T^{1-H} &\text{ if } \frac{1}{2} < H < 1.
    \end{cases}
\end{align*}
Similarly,
\begin{align*}
    \left\Vert\mathrm{Cov}(\vvec(\hat \gamma_{U,T}(t))) \right\Vert&=\left\Vert \mathrm{Cov}\left(\vvec\left(\frac{1}{T} \int_0^T U_{t+s} U_t^\top ds\right)\right)\right\Vert \sim \frac{1}{T} \int_1^T s^{4H-4} ds,
\end{align*}
and thus the rate function $l_2(T)$ for $\hat\gamma_{U,T}(t)$ is
\begin{equation*}
    l_2(T) = \begin{cases}
        \sqrt{T} &\text{ if } 0<H<\frac{3}{4}, \\
        \sqrt{\frac{T}{\log T}} &\text{ if } H = \frac{3}{4}, \\
        T^{2-2H} &\text{ if } \frac{3}{4}< H <1.
        \end{cases}
\end{equation*}
Therefore, $l_1^2 > l_2$ for $0<H\leq\frac{3}{4}$ and $l_1^2 = l_2$ for $\frac{3}{4}<H<1$. The case $l_1^2(T)<l_2(T)$ is not possible. 
If the covariance function decays faster than $t^{2H-2}$, e.g., exponentially fast, we obtain the standard rate $l_1(T) = l_2(T) = \sqrt{T}$ for both $\hat{b}_{U,T}$ and $\hat\gamma_{U,T}$. 

For Assumption \ref{ass:consistency} parts (iii)-(iv), note that typically (iii) is automatically satisfied whenever $X$ is not smooth. In particular, (iii) is automatically satisfied whenever the components of $X$ are merely H\"older continuous for some $H\in(0,1)$, see \cite{azmoodeh2014necessary}, and then $\Vert V(t)\Vert \sim t^{2H}$ for some $H\in(0,1)$. For (iv), note that, in the Gaussian case, the matrix $f(j,k,N)$ is a diagonal matrix with diagonal entries of the form $2\eta^2_i+\sum_{v=1,v\neq i}^d \eta_i\eta_v$, where $\eta_v = \mathrm{Cov}(\Delta X^{(v)}_{t_k},\Delta X^{(v)}_{t_j})$. If $X$ is the standard $d$-dimensional fractional Brownian motion (fBm) with Hurst index $H\in(0,1)$, this behaves as $N^{-4H}|j-k|^{4H-4}$. This yields (iv) directly. More generally, in the Gaussian case, the behavior of $\mathrm{Cov}(\Delta X^{(v)}_{t_k},\Delta X^{(v)}_{t_j})$ is closely related to the decay of $\gamma(t)$ determined by the behavior of function $V(t)$ for $t\to \infty$ while $V(1/N)$ is related to the function $V(t)$ for $t\to 0$.

In the non-Gaussian case, Assumption \ref{ass:consistency} (i) and (ii) are also typically valid provided that the memory vanishes as the lags grow without a limit. One can again compute the rates $l_1(T)$ and $l_2(T)$ explicitly from $\left\Vert\mathrm{Cov}(\hat b_{U, T})\right\Vert$ and $ \left\Vert\mathrm{Cov}(\vvec(\hat \gamma_{U,T}(t))) \right\Vert$. However, unlike in the Gaussian case, also $l_1^2(T)<l_2(T)$ is possible, see \cite{lietzen2021modeling} for examples. For Assumption \ref{ass:consistency} parts (iii)-(iv), part (iii) is again related to the smoothness of $X$. For example, under hypercontractivity for the moments, one has $\Vert V(t)\Vert \sim t^{2H}$, exactly as in the Gaussian case, whenever the components are H\"older continuous, see \cite{nummi2024necessary}. Finally, for part (iv), as long as the components of $X$ are independent, the matrix $f(j,k,N)$ is diagonal with $i$:th entry given by
$$
\mathrm{Cov}\left((\Delta X^{(v)}_{t_k})^2,(\Delta X^{(v)}_{t_j})^2\right) + \sum_{v=1,v\neq i}^d \eta_i\eta_v.
$$
Here the asymptotic behavior of $\mathrm{Cov}\left((\Delta X^{(v)}_{t_k})^2,(\Delta X^{(v)}_{t_j})^2\right)$ is linked to $ \left\Vert\mathrm{Cov}(\vvec(\hat \gamma_{U,T}(t))) \right\Vert$ while the asymptotic behavior of the terms $\eta_v$ is linked to the asymptotic $\left\Vert\mathrm{Cov}(\hat b_{U, T})\right\Vert$, and hence any of the terms can be the dominating one. 

As a concrete non-Gaussian example, we mention the $d$-dimensional Hermite process, a non-Gaussian generalization of the fBm. In this case, the behavior of all of the terms is similar to the case of the fBm, and hence all the assumptions are valid. Finally, we mention suitable Levy processes (such as Poisson processes or $\alpha$-stable processes) as an example of non-continuous drivers $X$. In this case, due to the independence of the increments, the covariances decay rapidly, and hence the rates are the standard $\sqrt{T}$. In the presence of jumps, one does not have smooth paths. This provides (iii). Finally, the independence of the increments gives (iv) under suitable assumptions on the Levy measure. The broad range of processes that fulfill the required assumptions allows constructing, e.g., interest rate models that incorporate linear combinations of these processes that enables simultaneous modeling of jumps and various memory effects.

\section{Simulated examples}
\label{sec:sim_ex}
In order to illustrate finite sample behavior and the applicability of the provided modeling approach, we present two different simulated examples. We provide one example where $\Theta$ is a diagonal matrix and one with a non-diagonal $\Theta$.
In both examples, two-dimensional interest rates with fBm as the random process $X$ were simulated using the Euler-Maryama method. The model parameters and the Hurst indices $H$ for the simulations are given in Table \ref{table:sim_par}.
The processes were simulated with 1000 different realizations of fBm, resulting in 1000 sample paths for each set of parameters in both examples.
The length of the time series in every sample was 4000, consisting of 10000 observations.

After simulating, the model parameters $b, \Theta$ and $\sigma$ were estimated from each sample.
The upper bound used in the integrals for estimating the matrices $B, C$ and $D$ was 5.
Figures \ref{fig:0.35_component_diag} and \ref{fig:0.8_component_diag} display, in the case of diagonal $\Theta$, the componentwise differences between the estimates $\hat \Theta$ and the true value $\Theta$ as histograms for $H=0.35$ and $H=0.8$, respectively.
Based on the histograms, it seems that the parameter $\Theta$ can be estimated rather well from the simulated data and, as expected, the differences of each component seem normally distributed.
However, as $H$ increases, the estimation error increases as the estimates for $\sigma$ become more inaccurate, see Figure \ref{fig:sigma_hist}.
The histograms in Figure \ref{fig:0.8_component_diag} also hint to that direction, as the centers of the distributions corresponding to the diagonal elements of $\Theta$ are slightly shifted from zero. This phenomenon is expected as well, since when $H$ increases, one has longer memory inducing slower rate of convergence. 

In the case of non-diagonal $\Theta$ and Brownian motion as the noise, Figure \ref{fig:brownian_component_nondiag} displays the componentwise values of the differences $\hat \Theta-\Theta$ and Figure \ref{fig:brownian_norm_nondiag} displays the Frobenius norms of the differences. The histograms in Figure \ref{fig:brownian_component_nondiag} seem normal. In this case, due to the independence of the increments, one has fast convergence towards normal distribution with mean zero. It follows that then the Frobenius norm is essentially a weighted sum of $\chi^2$-distributed random variables. This is visible in Figure \ref{fig:brownian_norm_nondiag}.

Other parameter combinations yield similar results. Histograms under all our parameter combinations are provided in an electronic supplementary material of the article, see \cite{repo}. Moreover, the repository provides a Python implementation for the proposed approach and the simulation results as numerical tables. 
\renewcommand{\arraystretch}{1.5}
\begin{table}[h]
\caption{Parameters for the simulated examples.}
\begin{tabular}{l|l|l|}
         & Diagonal case & Non-diagonal case \\ \hline
$b$      & $\begin{pmatrix} 0 & 0 \end{pmatrix}$ & $\begin{pmatrix} 1 & 3 \end{pmatrix}$ \\ \hline
$\Theta$ & $\begin{pmatrix} 0.5 & 0 \\ 0 & 0.3 \end{pmatrix}$ & $\begin{pmatrix} 0.5 & 0.1 \\ 0.1 & 0.3 \end{pmatrix}$ \\ \hline
$\sigma$ & $I$ & $\begin{pmatrix} 1 & 0 \\ 0 & 2 \end{pmatrix}$ \\ \hline
$H$ & \{0.35, 0.5, 0.6, 0.8\} &  \{0.35, 0.5, 0.6, 0.8\}                
\end{tabular}
\label{table:sim_par}
\end{table}

\begin{figure}[h!]
\includegraphics[width=0.9\textwidth]{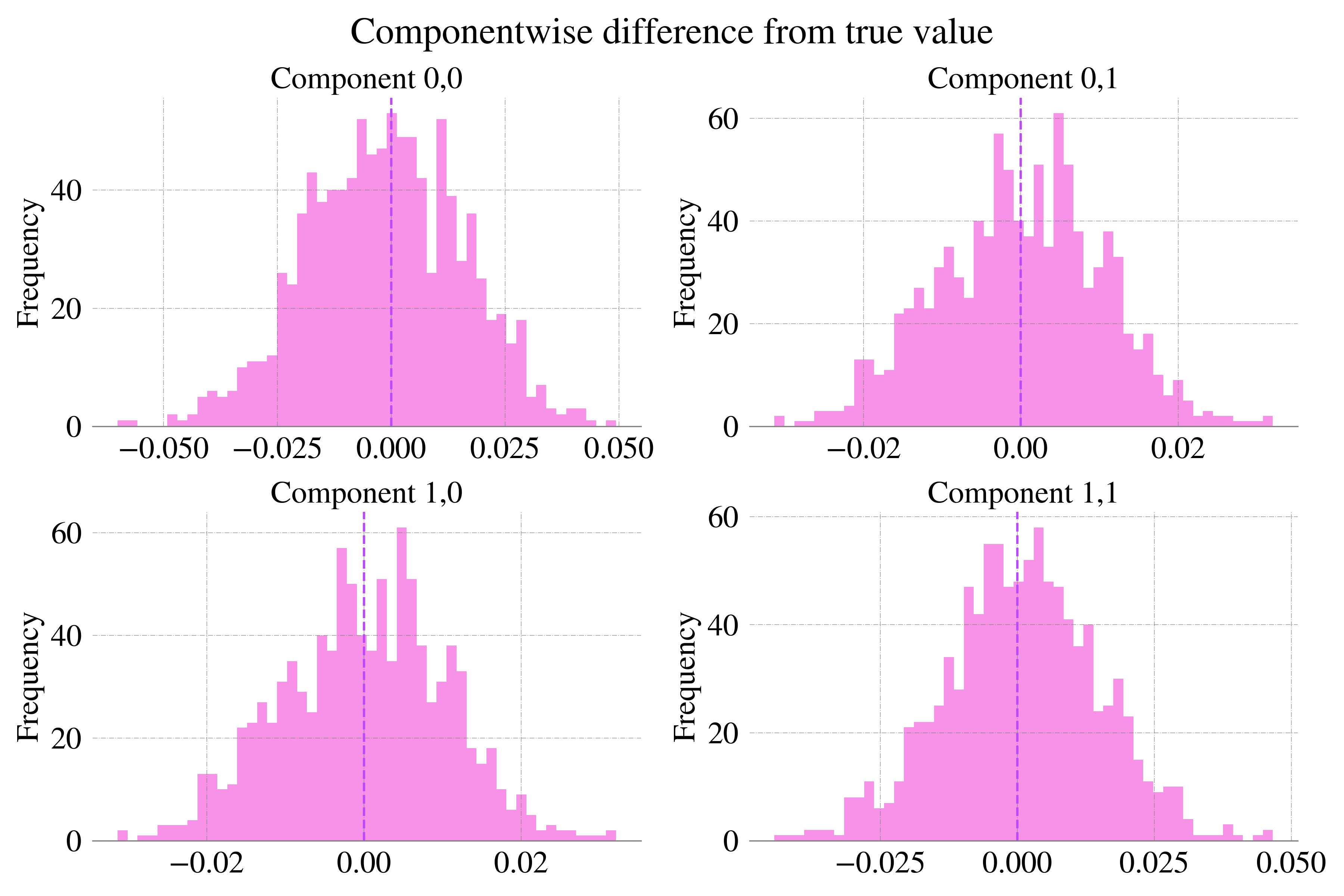}
\caption{Componentwise differences of the estimates $\hat \Theta$ from the true value of $\Theta$ in the diagonal case with $H=0.35$.}
\label{fig:0.35_component_diag}
\end{figure}

\begin{figure}[h!]
\includegraphics[width=0.9\textwidth]{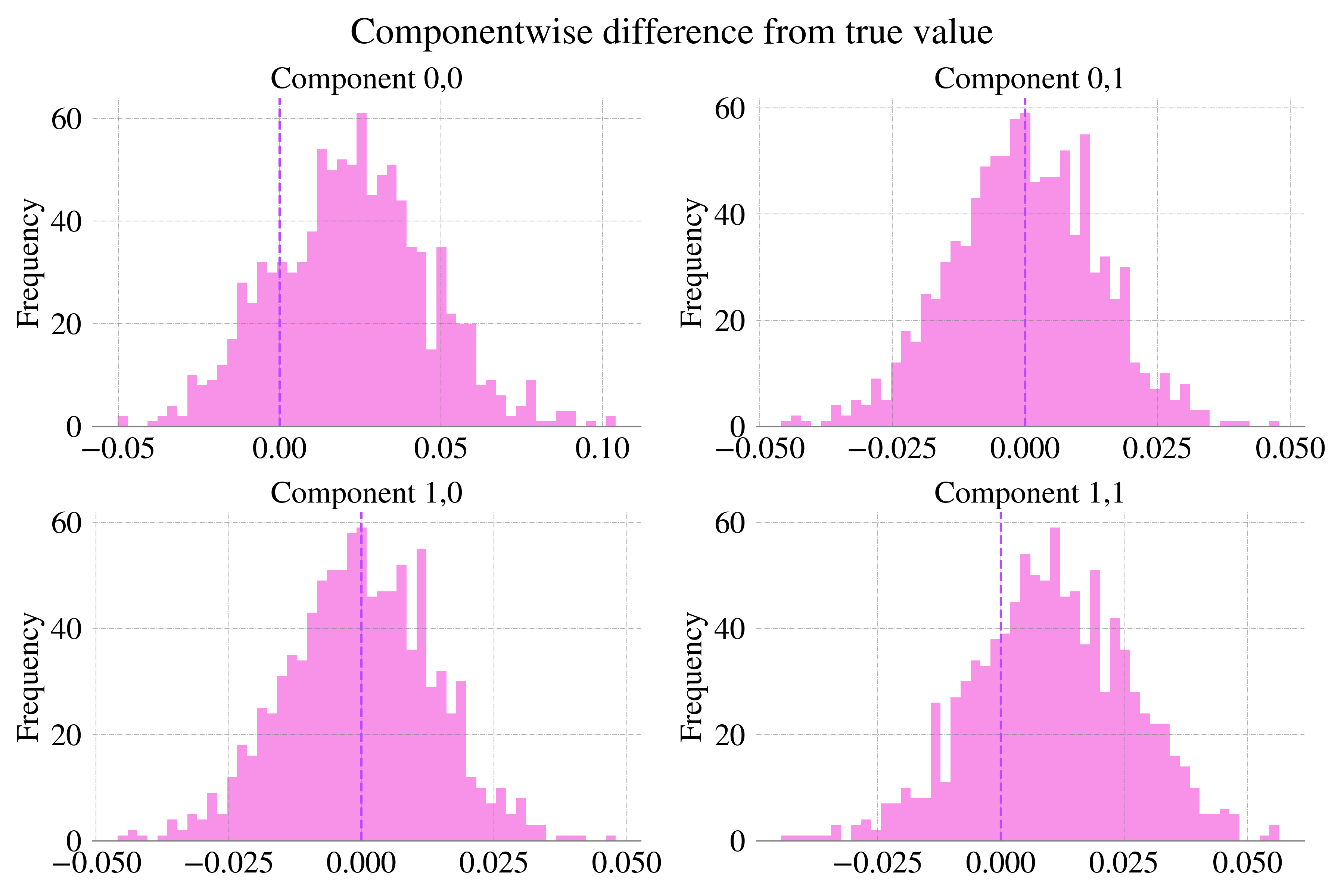}
\caption{Componentwise differences of the estimates $\hat \Theta$ from the true value of $\Theta$ in the diagonal case with $H=0.8$.}
\label{fig:0.8_component_diag}
\end{figure}

\begin{figure}[h!]
     \centering
     \begin{subfigure}{0.9\textwidth}
         \centering
         \includegraphics[width=\textwidth]{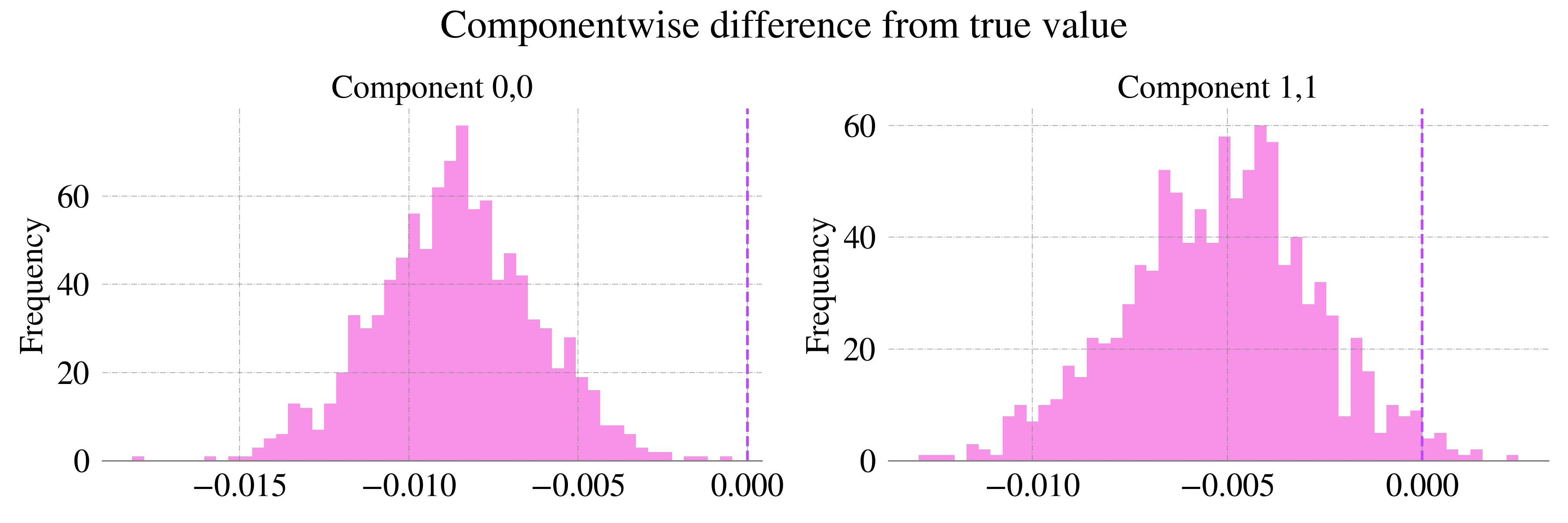}
         \caption{$H=0.35$.}
         \label{fig:sigma_hist_0.35}
     \end{subfigure}
     \hfill
     \begin{subfigure}{0.9\textwidth}
         \centering
         \includegraphics[width=\textwidth]{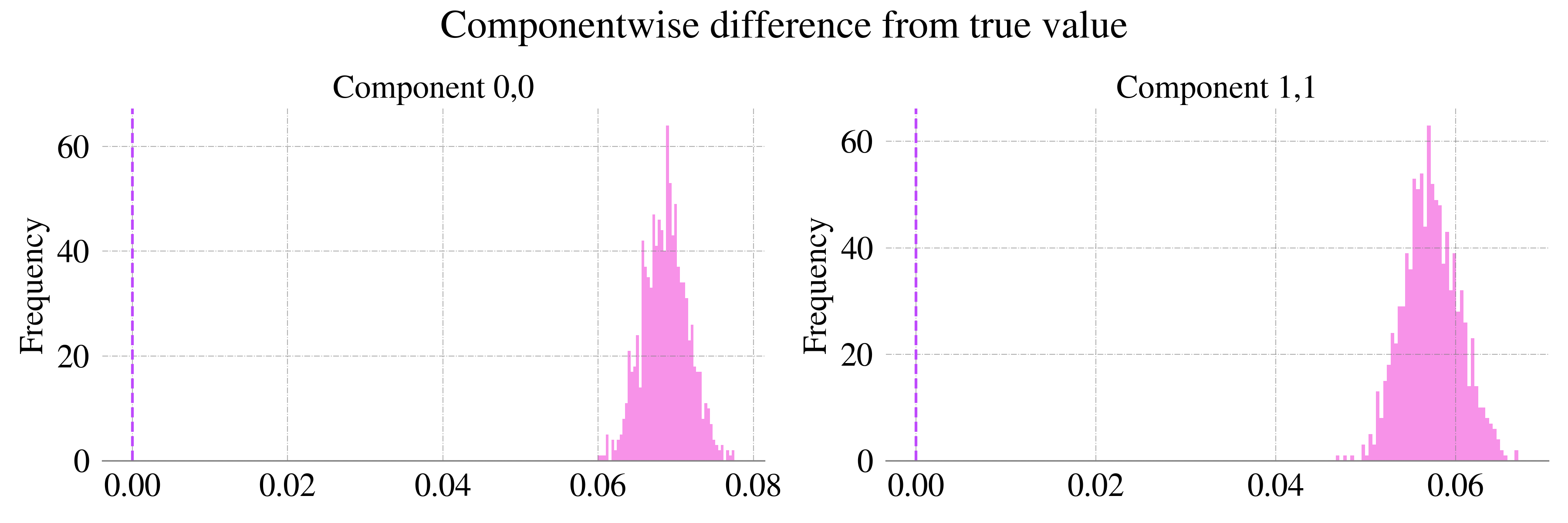}
         \caption{$H=0.8$.}
         \label{fig:sigma_hist_0.8}
     \end{subfigure}
        \caption{Diagonal values of the componentwise differences $\hat \sigma - \sigma$ in the diagonal case with two different Hurst indices.}
        \label{fig:sigma_hist}
\end{figure}

\begin{figure}[h!]
\includegraphics[width=\textwidth]{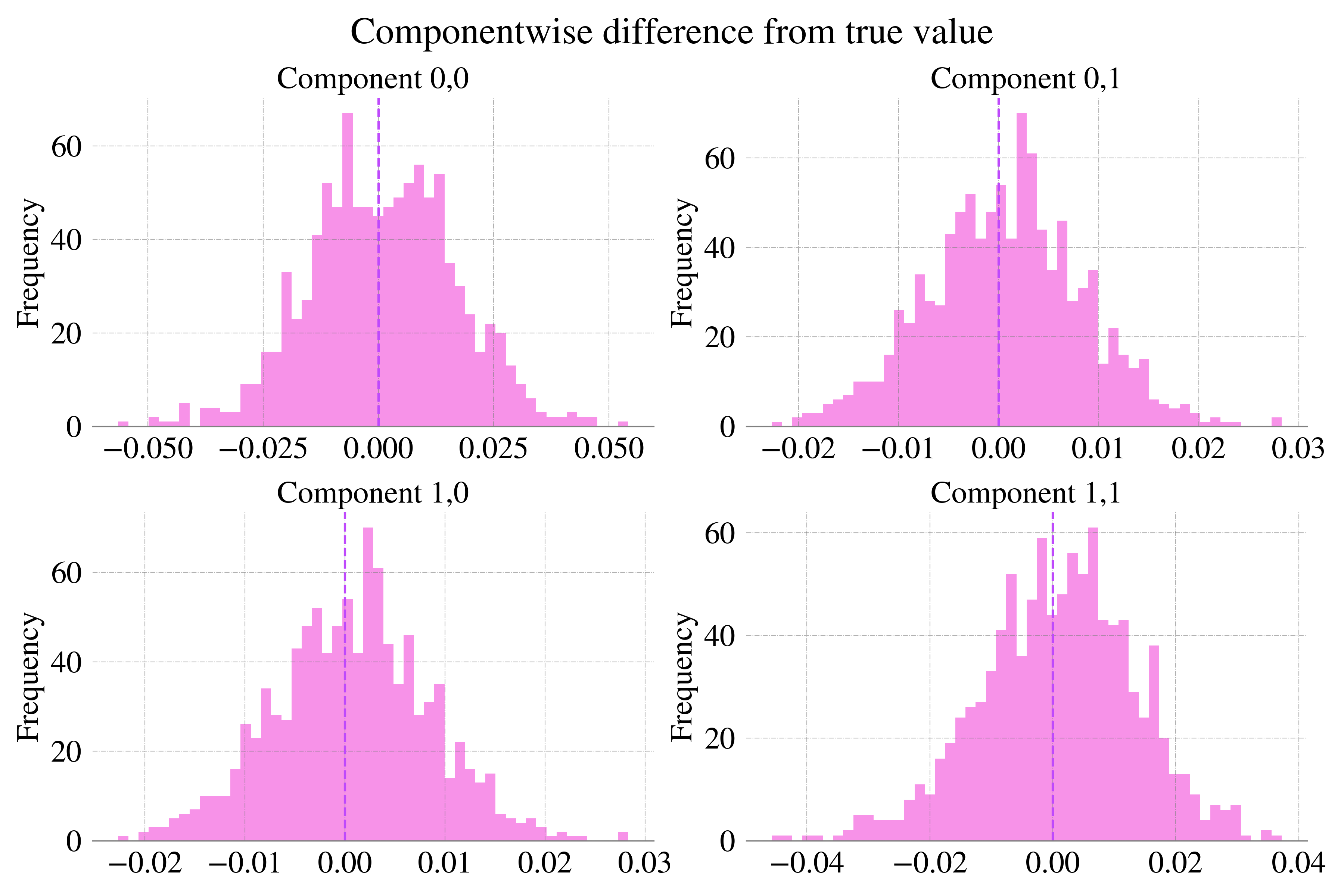}
\caption{Componentwise differences of the estimates $\hat \Theta$ from the true value of $\Theta$ in the non-diagonal case with Brownian noise.}
\label{fig:brownian_component_nondiag}
\end{figure}

\begin{figure}[h!]
\includegraphics[width=0.5\textwidth]{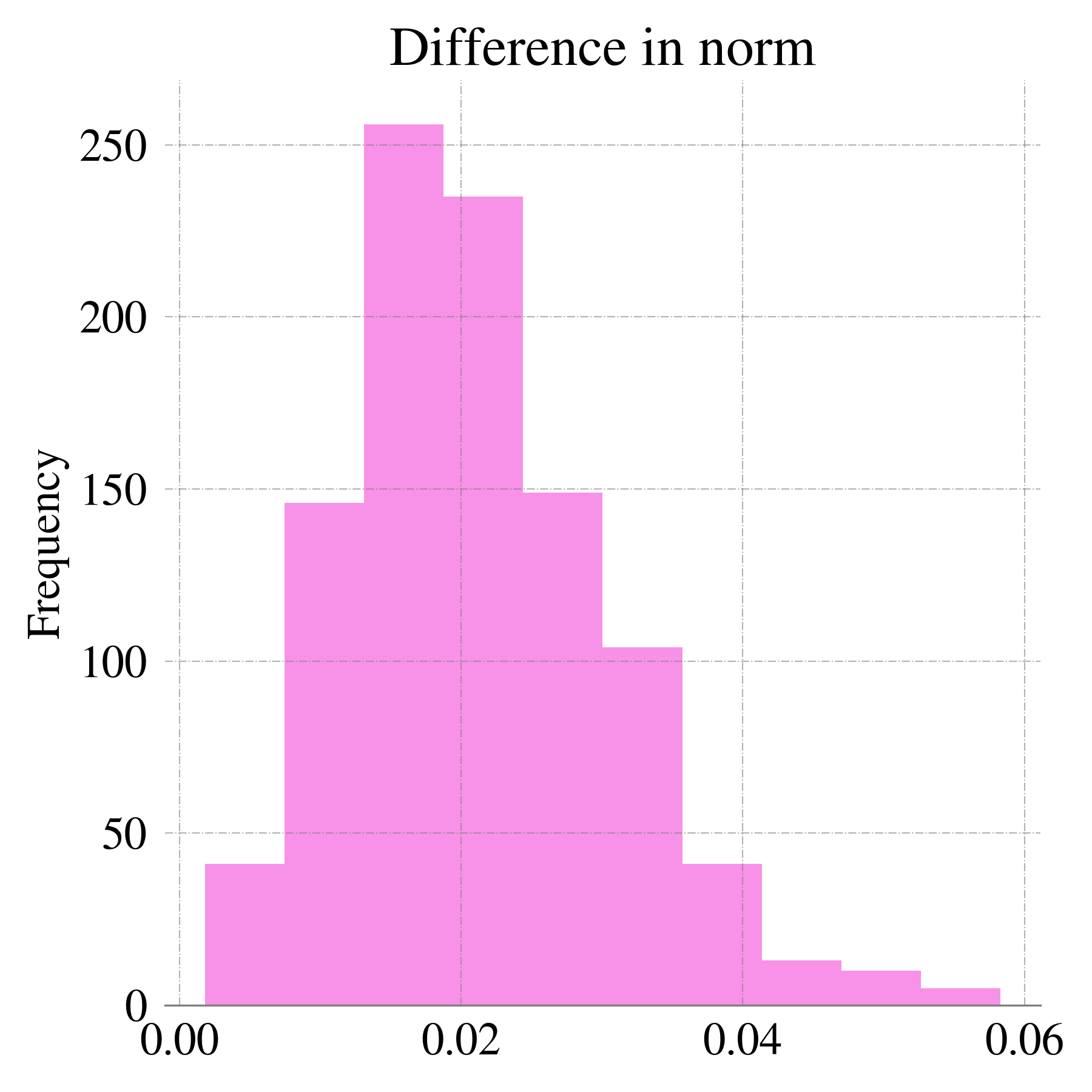}
\caption{Norms of the differences between the estimates $\hat \Theta$ and the true value of $\Theta$ in the non-diagonal case with Brownian noise.}
\label{fig:brownian_norm_nondiag}
\end{figure}

\section{Real data}
\label{sec:real_ex}
To further elaborate the practical usability of our proposed approach, we used it to model bivariate real data. The data consist of 1-month Euribor interest rate \cite{euribor} and the Federal Funds Effective Rate (DFF) \cite{dff} with 6765 daily observations from the time period 4.1.1999 - 6.5.2025.

Based on visualizing the data, the paths of the interest rates are somewhat smooth, indicating that their Hurst indices should be larger than $\frac12$ (see Figure \ref{fig:realdata}). This guided us to use $V(t) \sim t^{2H} I$ for $H\in\left(\frac12,1\right)$. We tried different values within the $\left(\frac12,1\right)$ range, and they all produced similar estimates. Therefore, estimation of the Hurst index does not seem to play a large role here. In the sequel, we report the results for the case $H=0.7$. 

The estimated model parameters were
\begin{align*}
    \hat \Theta &= \begin{pmatrix}
        0.007998 & -0.005324 \\
        -0.005324 & 0.019277
    \end{pmatrix},&\\
    \hat \sigma &= \begin{pmatrix}
        0.02053 & 0 \\
        0  & 0.083129
    \end{pmatrix}, \text{ and} &\\
    \hat b &= \begin{pmatrix}
        1.554624 \\
2.068773
    \end{pmatrix},&
\end{align*}
where the first dimension corresponds to the Euribor rate and the second dimension corresponds to the DFF rate.
Figure \ref{fig:realdata} displays the paths of both interest rates and their respective estimated mean levels $\hat b$.

The components of $\hat \sigma$ can be interpreted as the estimated volatility of the components of $r$.
The value corresponding to DFF is a bit larger than the value corresponding to Euribor. This can be seen when looking 
at the first half of the time series in Figure \ref{fig:realdata}:
the magnitude of fluctuations of DFF is slightly larger than that of the Euribor rate.

The diagonal components of $\hat \Theta$ can be interpreted as the estimated mean reversion rates of the components of $r$.
A larger value pushes the interest rate more strongly towards its mean, whereas a smaller value allows the rate to go and stay further away from the mean.
Therefore, a larger value corresponds to a smaller variance of the interest rate.

The off-diagonal component of $\hat \Theta$ describes the interaction between the components of $r$.
A negative value indicates a positive correlation, meaning that the rates are often on the same side of their means.
Each rate tries to push the other one to the same direction as it itself is: down if the rate itself is below its mean and up if the rate is above its mean.

In the model with Euribor and DFF, the off-diagonal value is negative, and the interest rates seem visibly positively correlated as they are often moving in the same direction and on the same side of their means, see Figure \ref{fig:realdata}.
Between 2016 and 2020, DFF increases while Euribor stays low.
However, even though DFF is increasing during this period, it still stays below the mean level.
Therefore, it is pushing Euribor downwards compensating the mean reversion effect.

A positive off-diagonal value would indicate a negative correlation with similar but opposite interactions between interest rates.
A value close to zero compared to the diagonal values would indicate no correlation between interest rates.

The model assumes stationary increments of the random process $X$. The estimated increments of $X$ for the Euribor-DFF data are visualized in Figure \ref{fig:increment}. Based on the figure, the stationarity assumption is not violated too badly: both increment series seem fat tailed, but they do not show trends or clear heteroskedasticity. 

Finally, to assess the model performance, we provide one-step predictions of the interest rates for the last fifth of the time series (1353 last observations). The predictions are visualized in Figure \ref{fig:prediction}. The predictions seem to be very good.

\begin{figure}[h!]
\includegraphics[width=\textwidth]{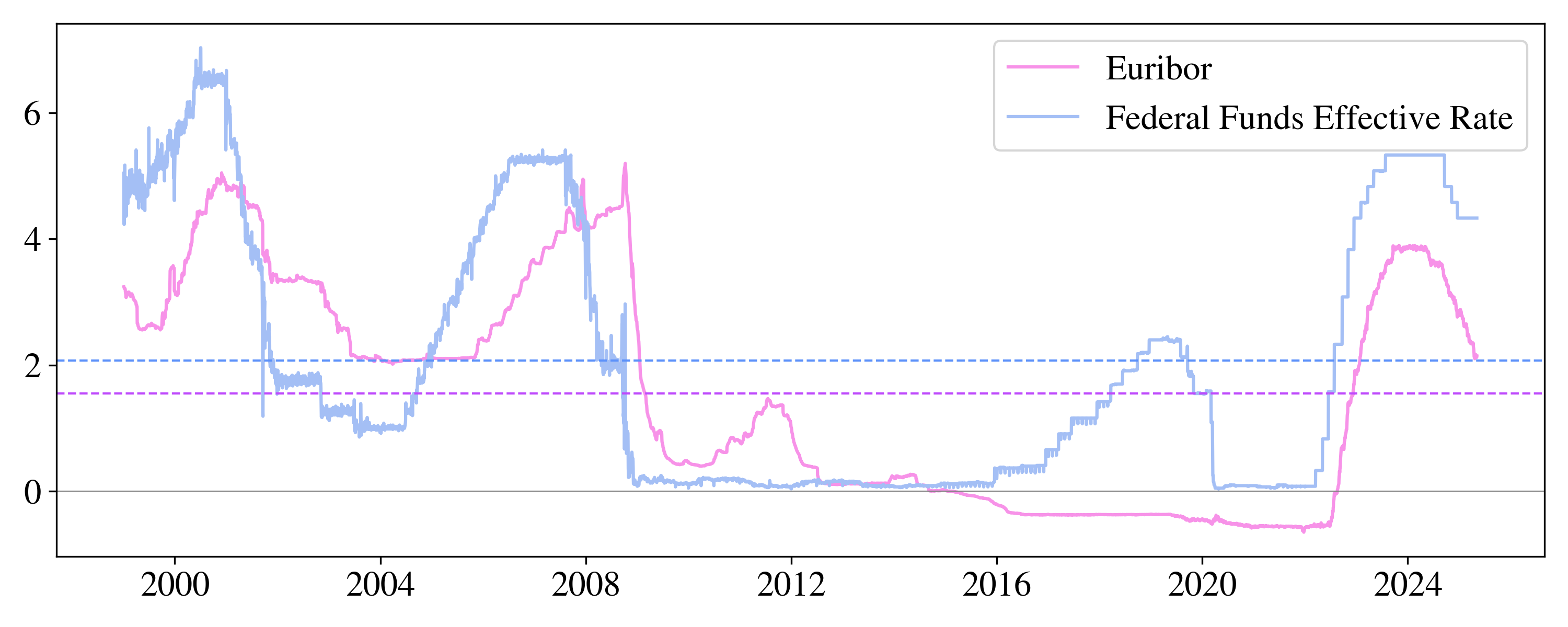}
\caption{Paths of the 1-month Euribor rate and the Federal Funds Effective Rate. The dashed lines indicate the mean levels of both interest rates.}
\label{fig:realdata}
\end{figure}

\begin{figure}[h!]
\includegraphics[width=0.9\textwidth]{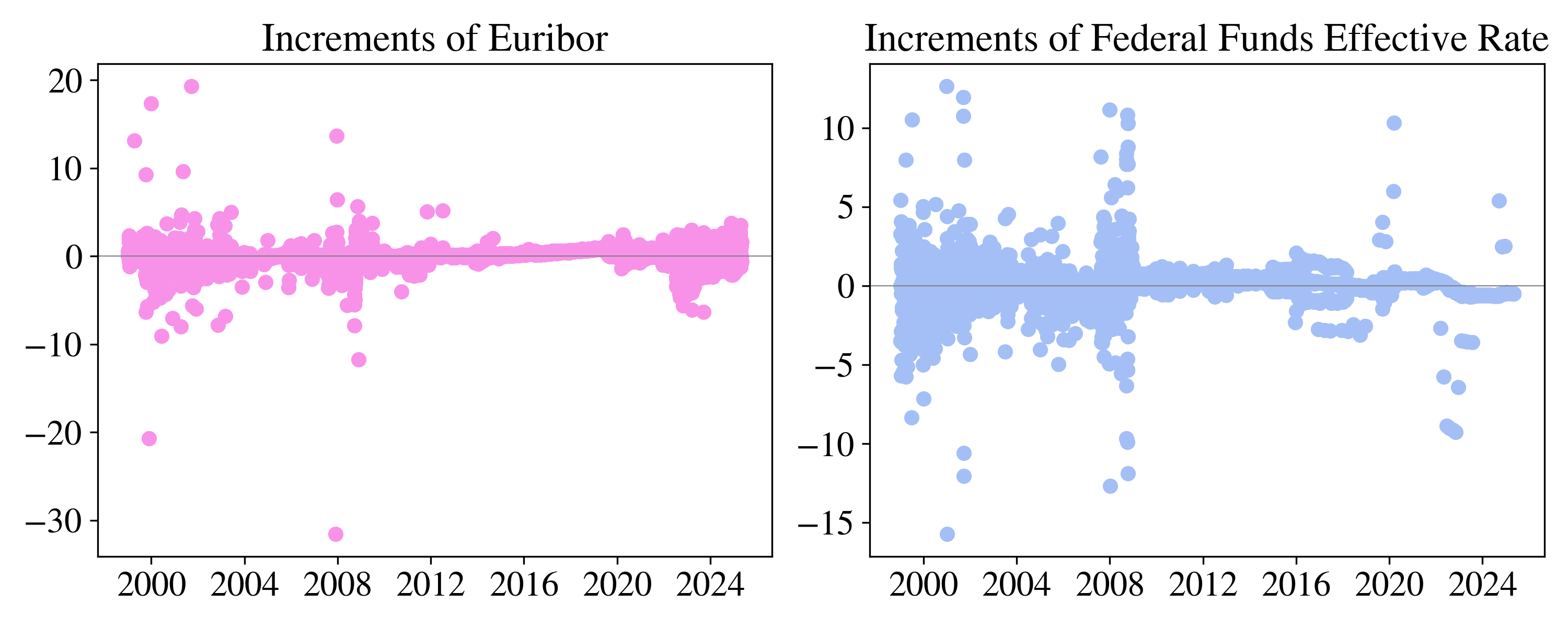}
\caption{Estimated increments of the random process $X$.}
\label{fig:increment}
\end{figure}

\begin{figure}[h!]
\includegraphics[width=0.9\textwidth]{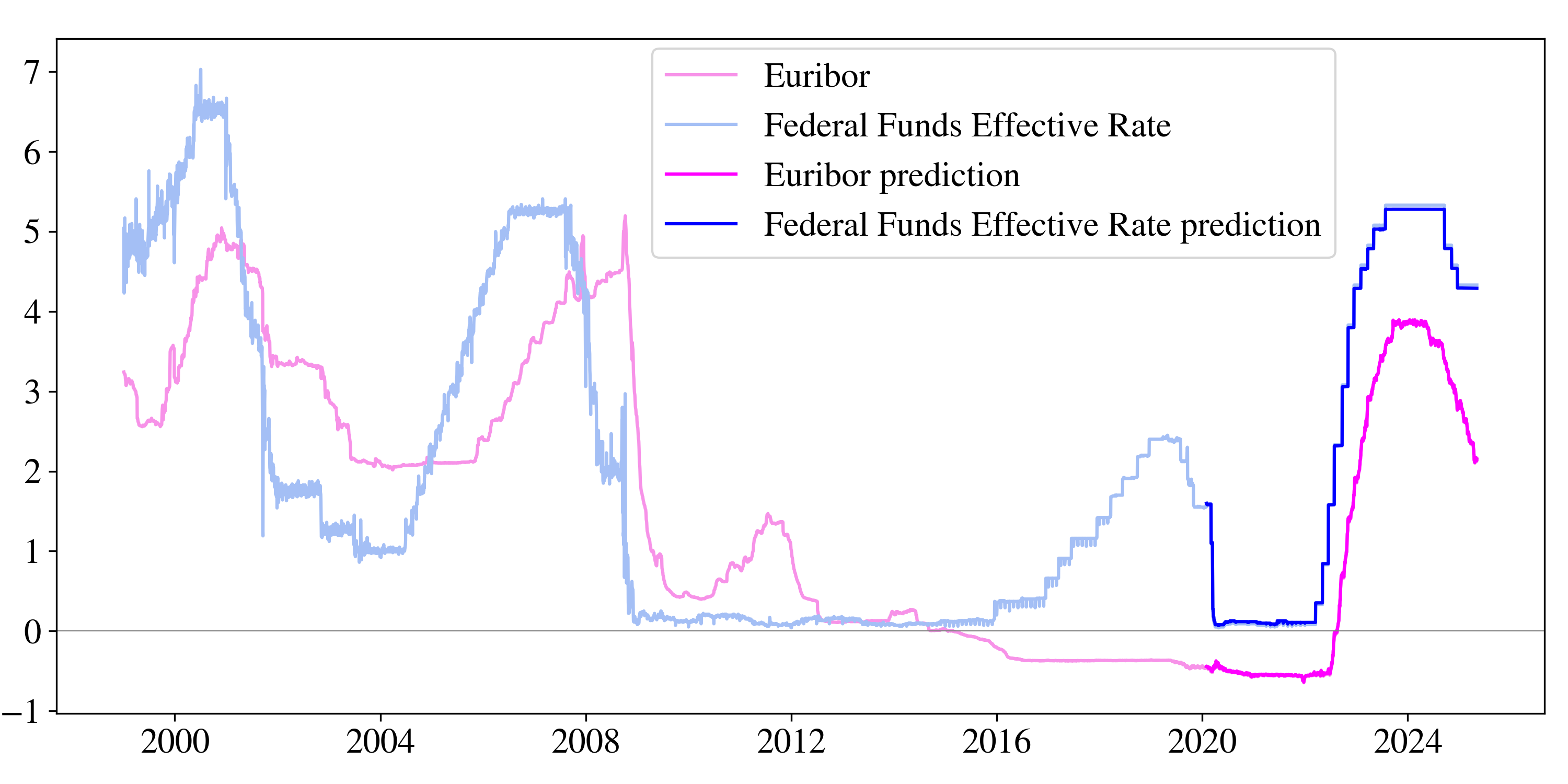}
\caption{Paths of both interest rates and one-step predictions for the last fifth of the time series.}
\label{fig:prediction}
\end{figure}

\section{Proofs}
\label{sec:proof}
All the technical proofs are presented in this section. The section is divided into three subsections.
Section \ref{sec:lemmas} contains the statements and proofs of lemmas that are used in proving the main results.
Section \ref{sec:cons_proof} provides the proof of Theorem \ref{thm:consistency} and Section \ref{sec:clt_proof} contains the proof of Theorem \ref{thm:CLT}.

Throughout this section, we use the $O_P$ notation for stochastic boundedness:
a random object $Z_n$ satisfies $Z_n= O_P(a_n)$ as $n \to \infty$, if for any  $\epsilon >0$ there exist finite $M > 0$ and $N>0$ such that 
\begin{equation*}
    \PR \left(\left\Vert \frac{Z_n}{a_n} \right\Vert > M \right)< \epsilon
\end{equation*}
for all $n>N$. 

\subsection{Auxiliary lemmas}
\label{sec:lemmas}
The first two lemmas in this section, Lemma \ref{lemma:r-U_link} and \ref{lemma:sol}, are fundamental results used throughout the section.
Lemmas \ref{lemma:mean}, \ref{lemma:gamma} and \ref{lemma:sigma} are used in the proof of Theorem \ref{thm:consistency}.
The remaining two lemmas, Lemma \ref{lemma:gamma_lim} and \ref{lemma:d_lim}, are used in the proof of Theorem \ref{thm:CLT}.

\begin{lemma}
\label{lemma:r-U_link}
    The solution $r$ to \eqref{eq:model} and the stationary solution $U$ given by \eqref{eq:U} are linked via 
\begin{equation}
\label{eq:r-U_link}
    r_t = e^{-\Theta t}(r_0-U_0-b)+b + U_t.
\end{equation}
\end{lemma}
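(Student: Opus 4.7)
The plan is to verify that the right-hand side of \eqref{eq:r-U_link} solves the same SDE \eqref{eq:model} as $r_t$ and shares its initial value, and then invoke uniqueness of solutions of the linear mean-reverting equation. This is the standard variation-of-constants argument, adapted to the fact that the stationary object $U$ has already been constructed so that it solves the $b=0$ version of \eqref{eq:model}.

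First I would recall from \cite{Marko1} (where the integral in \eqref{eq:U} is defined via integration by parts, so that $U$ is a genuine solution despite $X$ being only a stationary-increment process) that $U$ satisfies
\begin{equation*}
dU_t = -\Theta U_t \, dt + \sigma \, dX_t,
\end{equation*}
that is, $U$ is the stationary solution of \eqref{eq:model} with $b=0$. This step is the main (and only real) obstacle: if one wanted to be self-contained one would have to differentiate $U_t = e^{-\Theta t}\int_{-\infty}^t e^{\Theta s}\sigma\,dX_s$ directly, which requires the integration-by-parts interpretation for non-semimartingale $X$; since the paper already cites this fact, it can be used as a black box.

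Next, I would set
\begin{equation*}
Y_t := e^{-\Theta t}(r_0 - U_0 - b) + b + U_t,
\end{equation*}
and apply the product/chain rule to the deterministic part together with the SDE satisfied by $U$. Since $e^{-\Theta t}$ is a smooth deterministic matrix function, this gives
\begin{equation*}
dY_t = -\Theta\, e^{-\Theta t}(r_0 - U_0 - b)\, dt + dU_t
     = -\Theta\bigl[e^{-\Theta t}(r_0 - U_0 - b) + U_t\bigr] dt + \sigma\, dX_t.
\end{equation*}
Recognising the bracketed expression as $Y_t - b$, I would rewrite this as $dY_t = \Theta(b-Y_t)\,dt + \sigma\, dX_t$, so that $Y$ satisfies \eqref{eq:model}. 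A direct substitution $t=0$ yields $Y_0 = (r_0-U_0-b) + b + U_0 = r_0$, matching the initial condition of $r$.

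Finally, I would invoke uniqueness for the linear SDE \eqref{eq:model}: the difference $r_t - Y_t$ satisfies the homogeneous equation $d(r_t - Y_t) = -\Theta(r_t - Y_t)\,dt$ with zero initial value, hence $r_t - Y_t \equiv 0$. This gives \eqref{eq:r-U_link}. No further probabilistic machinery is required, which makes this lemma essentially a bookkeeping exercise once the SDE satisfied by $U$ is in hand.
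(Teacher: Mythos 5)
Your argument is correct and is essentially the paper's own proof viewed from the other direction: the paper subtracts the stationary equation $dU_t=-\Theta U_t\,dt+\sigma\,dX_t$ from \eqref{eq:model} to obtain the deterministic linear ODE $dk_t=\Theta(b-k_t)\,dt$ for $k_t=r_t-U_t$ and solves it, while you verify the candidate formula and then conclude via the same homogeneous ODE for the difference. Both hinge on the same single fact (that $U$ solves the $b=0$ equation in the integration-by-parts sense of \cite{Marko1}), so no substantive difference.
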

\begin{proof}
    By subtracting the stationary equation $dU_t = -\Theta U_t dt + \sigma dX_t$ from \eqref{eq:model}, we have 
    \begin{equation*}
       dr_t - dU_t = \Theta (b-(r_t - U_t))dt. 
    \end{equation*}
Solving this equation for $k_t = r_t - U_t$ with initial value $k_0 = r_0 - U_0$ gives the result.
\end{proof}

\begin{lemma}
\label{lemma:sol}
    The solution to \eqref{eq:model} is given by 
    \begin{equation*}
    r_t = e^{-\Theta t}\left(r_0-b\right)+b + \int_0^t e^{-\Theta(t-s)}dX_s.
\end{equation*}
\end{lemma}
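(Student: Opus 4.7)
The plan is to use the classical integrating-factor argument, adapted to the fact that $X$ is not assumed to be a semimartingale. Throughout, I interpret \eqref{eq:model} in integral form
$$
r_t \;=\; r_0 + \int_0^t \Theta(b - r_s)\,ds + \sigma X_t,
$$
and define integrals of smooth deterministic matrix-valued functions against $dX$ pathwise via integration by parts, exactly as in the construction of $U$ in \eqref{eq:U}; in particular, I set
$$
\int_0^t e^{-\Theta(t-s)}\sigma\,dX_s \;:=\; \sigma X_t \;-\; \int_0^t \Theta e^{-\Theta(t-s)}\sigma X_s\,ds,
$$
which is well-defined because $X$ is $L^2$-continuous and $s\mapsto e^{-\Theta(t-s)}$ is of bounded variation on $[0,t]$.

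The cleanest route is to verify the candidate directly. Denote by $\tilde r_t$ the right-hand side of the claimed formula and, using the above definition, expand it as
$$
\tilde r_t \;=\; e^{-\Theta t}(r_0 - b) + b + \sigma X_t - \int_0^t \Theta e^{-\Theta(t-s)}\sigma X_s\,ds.
$$
Then I compute $\int_0^t \Theta(b - \tilde r_u)\,du$ piece by piece. Integration of $-\Theta e^{-\Theta u}(r_0 - b)$ over $[0,t]$ gives $(e^{-\Theta t} - I)(r_0 - b)$, and the $\Theta\sigma X_u$ contribution integrates to $\int_0^t \Theta\sigma X_u\,du$. The remaining double integral coming from $\tilde r_u$ is handled by Fubini, swapping the order of the $du$ and $ds$ integrations and applying the elementary identity $\int_s^t \Theta e^{-\Theta(u-s)}\,du = I - e^{-\Theta(t-s)}$; after this swap the two $\int_0^t \Theta\sigma X_u\,du$ contributions cancel exactly, leaving
$$
\int_0^t \Theta(b - \tilde r_u)\,du \;=\; e^{-\Theta t}(r_0 - b) - (r_0 - b) - \int_0^t \Theta e^{-\Theta(t-s)}\sigma X_s\,ds \;=\; \tilde r_t - r_0 - \sigma X_t,
$$
which is precisely the integral form of \eqref{eq:model}. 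A more conceptual alternative is to apply pathwise integration by parts to $Y_t := e^{\Theta t}(\tilde r_t - b)$: since $\tilde r_t - b$ differs from $\sigma X_t$ by a pathwise absolutely continuous process, the product rule is legal and yields $dY_t = e^{\Theta t}\sigma\,dX_t$, so that $Y_t - Y_0 = \int_0^t e^{\Theta s}\sigma\,dX_s$; multiplying by $e^{-\Theta t}$ recovers the stated expression.

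The main obstacle is essentially notational, namely keeping the pathwise integration-by-parts definition of $\int_0^t f(s)\sigma\,dX_s$ consistent throughout and avoiding any appeal to semimartingale-specific tools such as It\^o's formula; once this bookkeeping is in place, the derivation reduces to commutative matrix calculus with $\Theta$ and $e^{-\Theta t}$, together with a single application of Fubini. Uniqueness, if desired, is immediate: the difference of any two solutions satisfies the deterministic linear ODE $u'_t = -\Theta u_t$ with $u_0 = 0$ and therefore vanishes identically.
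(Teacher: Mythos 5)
Your proof is correct, but it takes a genuinely different route from the paper. The paper obtains this lemma in two lines as a corollary of Lemma \ref{lemma:r-U_link}: it writes $U_t = e^{-\Theta t}U_0 + \int_0^t e^{-\Theta(t-s)}dX_s$ and substitutes this into $r_t = e^{-\Theta t}(r_0-U_0-b)+b+U_t$, so that the $U_0$ terms cancel. You instead verify the candidate formula directly against the integral form $r_t = r_0 + \int_0^t \Theta(b-r_s)\,ds + \sigma X_t$, using the pathwise integration-by-parts definition of $\int_0^t e^{-\Theta(t-s)}\,dX_s$ and a Fubini swap; your computation checks out, and the bookkeeping (the cancellation of the two $\int_0^t \Theta\sigma X_u\,du$ terms) is exactly right. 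What your approach buys is self-containedness: it does not presuppose that $U$ solves the stationary equation (a fact the paper imports from \cite{Marko1} inside the proof of Lemma \ref{lemma:r-U_link}), and you also supply the uniqueness argument via the deterministic ODE for the difference of two solutions, which the paper leaves implicit. What the paper's route buys is brevity and the explicit structural link between $r$ and the stationary process $U$, which is what the later lemmas actually use. One cosmetic point: you carry $\sigma$ inside the integral, whereas the statement (and the paper's definition of $U$ in \eqref{eq:U}) omits it; given that the model \eqref{eq:model} has the noise term $\sigma\,dX_t$, your version is the internally consistent one, and the discrepancy is a notational artifact of the paper rather than an error on your part.
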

\begin{proof}
    Using 
    \begin{equation*}
        U_t = e^{-\Theta t} \int_{-\infty}^t e^{\Theta s} dX_s =  e^{-\Theta t} U_0 +  \int_0^t e^{-\Theta (t-s)} dX_s
    \end{equation*}
and plugging it into \eqref{eq:r-U_link} yields the claim directly.
\end{proof}

\begin{lemma}
\label{lemma:mean}
We have 
\begin{align*}
&\frac{1}{T}\int_0^T r_v dv - b =\frac{1}{T}\int_0^T U_v dv  + E(T)
\end{align*}
with an error $E(T) = O_P(1/T)$ as $T\to\infty$.
In addition, if $\Vert \gamma(t)\Vert \to 0$ as $t\to \infty$, then, as $T\to\infty$, we have
$$
\frac{1}{T}\int_0^T r_v dv \to b
$$
in probability. 
\end{lemma}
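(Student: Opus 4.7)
The plan is to reduce everything to the stationary process $U$ via Lemma \ref{lemma:r-U_link} and then exploit the positive definiteness of $\Theta$ and the decay of $\gamma$.

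First, I would apply Lemma \ref{lemma:r-U_link} to write $r_v = e^{-\Theta v}(r_0-U_0-b) + b + U_v$. Integrating over $[0,T]$, dividing by $T$, and subtracting $b$ yields
\begin{equation*}
\frac{1}{T}\int_0^T r_v\,dv - b = \frac{1}{T}\int_0^T U_v\,dv + \frac{1}{T}\left(\int_0^T e^{-\Theta v}\,dv\right)(r_0-U_0-b).
\end{equation*}
Since $\Theta$ is positive definite, we have $\int_0^T e^{-\Theta v}\,dv = \Theta^{-1}(I - e^{-\Theta T})$, whose operator norm is bounded uniformly in $T$. Because $r_0$ is square-integrable by assumption and $U_0 = \int_{-\infty}^0 e^{\Theta s}\,dX_s$ is square-integrable as well (using the positive definiteness of $\Theta$ and the assumed bounds on the increments of $X$), the vector $r_0 - U_0 - b$ is $O_P(1)$. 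Therefore, the last term on the right-hand side is $O_P(1/T)$, which identifies $E(T)$ and establishes the first claim.

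For the convergence statement, it suffices, by the first part, to show that $T^{-1}\int_0^T U_v\,dv \to 0$ in probability. Since $U$ is centered and stationary with covariance $\gamma$, I would compute the covariance matrix
\begin{equation*}
\Cov\left(\frac{1}{T}\int_0^T U_v\,dv\right) = \frac{1}{T^2}\int_0^T\int_0^T \gamma(u-v)\,du\,dv = \frac{1}{T}\int_{-T}^T \left(1-\frac{|s|}{T}\right)\gamma(s)\,ds.
\end{equation*}
Taking norms and using $\gamma(-s) = \gamma(s)^\top$, this is bounded by $\frac{2}{T}\int_0^T \Vert\gamma(s)\Vert\,ds$. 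Under the assumption $\Vert\gamma(s)\Vert \to 0$ as $s\to\infty$, a Cesàro-type argument shows that this upper bound tends to zero. Chebyshev's inequality then gives convergence in probability of $T^{-1}\int_0^T U_v\,dv$ to zero, completing the proof.

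There is no real obstacle here; the main point requiring care is ensuring $U_0$ is indeed square-integrable so that the deterministic-rate bound on the initial transient is valid, and verifying the Cesàro step for the matrix-valued covariance, which is routine given that the $L^2$-norm is sub-multiplicative and $\gamma(s)^\top$ has the same norm as $\gamma(s)$.
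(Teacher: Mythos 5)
Your proposal is correct and follows essentially the same route as the paper: the same decomposition via Lemma \ref{lemma:r-U_link} with the uniform bound $\Vert\Theta^{-1}(I-e^{-\Theta T})\Vert\leq C$ giving the $O_P(1/T)$ transient, and the same second-moment bound $\frac{2}{T}\int_0^T\Vert\gamma(s)\Vert\,ds$ for the stationary average. The only cosmetic difference is that you close the argument with a Ces\`aro-mean step where the paper splits into the integrable case and a L'H\^opital argument; these are equivalent.
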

\begin{proof}
    Using Lemma \ref{lemma:r-U_link} and straightforward computations yield
    \begin{align*}
        &\frac{1}{T}\int_0^T r_v dv - b = \frac{1}{T}\int_0^T U_v dv - \Theta^{-1}\frac{I-e^{-\Theta T}}{T}(r_0 - U_0 - b),
    \end{align*}
    where $-\Theta^{-1}\frac{I-e^{-\Theta T}}{T} = \tilde{E}(T)$ is a deterministic error matrix. Since $r_0 - U_0 - b$ is square integrable, it suffices to control $\tilde{E}(T)$. The norm
    \begin{align*}
        \Vert \tilde{E}(T) \Vert &\leq \frac{1}{T} \Vert \Theta^{-1} \Vert \Vert I-e^{-\Theta T}\Vert \leq \frac{1}{T} \Vert \Theta^{-1} \Vert (\Vert I \Vert + \Vert e^{-\Theta T}\Vert).
    \end{align*}
    Since $\Theta$ is positive definite, we can use its eigendecomposition $\Theta = Q\Lambda Q^\top$ (as in \cite{Marko1}) to conclude that $\Vert \Theta^{-1} \Vert = \Vert Q^\top \Lambda^{-1} Q \Vert \leq \Vert \Lambda^{-1} \Vert = \frac{1}{\lambda_{\text{min}}}$, where $\lambda_{\text{min}}$ is the smallest eigenvalue of $\Theta$.
    Similarly, $\Vert e^{-\Theta T}\Vert = \Vert Q e^{-\Lambda T} Q^\top\Vert \leq \Vert e^{-\Lambda T}\Vert = e^{-\lambda_{\text{min}}T}$.
    Using these, we obtain
    \begin{align*}
        \frac{1}{T} \Vert \Theta^{-1} \Vert (\Vert I \Vert + \Vert e^{-\Theta T}\Vert)
        &\leq \frac{1}{T} \frac{1}{\lambda_{\text{min}}} (1 + e^{-\lambda_{\text{min}}T} )
        \leq \frac{1}{T} C,
    \end{align*}
    for a constant $C \geq \frac{2}{\lambda_{\text{min}}}$.
    Therefore, $E(T) = O_P(1/T)$ proving the first part of the claim. 

    For the second part of the claim, the variance of the mean estimator satisfies
    \begin{align*}
        \left\Vert \E[(\frac{1}{T}\int_0^T U_v )dv(\frac{1}{T}\int_0^T U_v dv)^\top] \right\Vert& = \left\Vert \frac{1}{T^2}  \int_0^T\int_0^T \E[U_v U_s^\top] dvds \right\Vert \\
        \leq \frac{1}{T^2}  \int_0^T\int_0^T \left\Vert\gamma(v-s)\right\Vert dvds 
        &=  \frac{2}{T}  \int_0^T\left\Vert\gamma(v)\right\Vert \frac{(T-v)}{T} dv \leq \frac{2}{T}  \int_0^T\left\Vert\gamma(v)\right\Vert dv.
    \end{align*}
    If $\Vert \gamma \Vert$ is integrable, then $\frac{2}{T}  \int_0^T\Vert\gamma(v)\Vert dv \leq \frac{2}{T}  \int_0^\infty \Vert\gamma(v)\Vert dv \rightarrow 0$ as $T \rightarrow \infty$. If $\Vert \gamma \Vert$ is not integrable, L'Hopital's rule gives us
    \begin{align*}
        \lim_{T\to \infty} \frac{\int_0^T\Vert\gamma(v)\Vert dv}{T} = \lim_{T\to \infty} \Vert\gamma(T)\Vert = 0.
    \end{align*}
    Since convergence in $L^2$ implies convergence in probability, this concludes the proof.
\end{proof}

\begin{lemma}
\label{lemma:gamma}

    We have
    \begin{align*}
        \hat{\gamma}_{r,T}(s) &= \frac{1}{T}\int_0^T r_{s+v}r_v^\top dv - \frac{1}{T^2} \int_0^T r_vdv \int_0^T r_v^\top dv \\
        &= \frac{1}{T}\int_0^T U_{v+s}U_v^\top dv + E(T) -(\hat b_T-b)(\hat b_T-b)^\top
    \end{align*}
    with the error $E(T)$ satisfying $E(T) = O_P(1/T)$ as $T\to\infty$.
    In particular, if $\hat{b}_T \to b$ in probability and 
    $$
    \hat\gamma_{U,T}(s)=\frac{1}{T}\int_0^T U_{v+s}U_v^\top dv \to \gamma(s)
    $$
    in probability, uniformly in $s\in[0,t]$, then 
    $$
    \hat\gamma_{r,T}(s) \to \gamma(s)
    $$
    in probability, uniformly in $s\in[0,t]$.
\end{lemma}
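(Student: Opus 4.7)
The plan is to replace $r$ with its decomposition from Lemma \ref{lemma:r-U_link} and peel off the nuisance terms. Writing $Z := r_0 - U_0 - b$ and $W_v := r_v - b = U_v + e^{-\Theta v}Z$, I would expand the bilinear integrals in the definition of $\hat\gamma_{r,T}(s)$. Noting that $\frac{1}{T}\int_0^T r_v\,dv = b + \bar W_T$ with $\bar W_T := \frac{1}{T}\int_0^T W_v\,dv$, the deterministic $bb^\top$ contributions and the symmetric cross terms $b\bar W_T^\top + \bar W_T b^\top$ cancel exactly, leaving
\[
\hat\gamma_{r,T}(s) = \frac{1}{T}\int_0^T W_{s+v}W_v^\top\, dv + \bigl(\bar W_{T,s}^+ - \bar W_T\bigr)\,b^\top - \bar W_T \bar W_T^\top,
\]
where $\bar W_{T,s}^+ := \frac{1}{T}\int_0^T W_{s+v}\,dv$. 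Since $\bar W_T = \hat b_T - b$, the last term already matches the $(\hat b_T - b)(\hat b_T - b)^\top$ contribution in the target identity.

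Next, I would further expand the first integral using $W_v = U_v + e^{-\Theta v}Z$, which produces $\hat\gamma_{U,T}(s)$ plus three cross terms, each carrying an exponentially decaying factor $e^{-\Theta v}$ or $e^{-\Theta(s+v)}$. By the spectral argument from the proof of Lemma \ref{lemma:mean}, $\Vert e^{-\Theta v}\Vert \leq e^{-\lambda_{\min}v}$, and hence $\int_0^\infty \Vert e^{-\Theta v}\Vert\,dv < \infty$. Combining this with Cauchy--Schwarz, the stationarity identity $\mathbb{E}\Vert U_{s+v}\Vert^2 = \mathrm{tr}\,\gamma(0)$, and the standing square-integrability of $Z$, each of these three cross terms has second moment of order $1/T^2$ uniformly in $s \in [0,t]$, hence is $O_P(1/T)$ uniformly in $s$. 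The correction $\bar W_{T,s}^+ - \bar W_T = \frac{1}{T}\int_T^{T+s}W_u\,du - \frac{1}{T}\int_0^s W_u\,du$ is likewise $O_P(1/T)$ uniformly in $s\in[0,t]$ by the same moment estimates applied to $\Vert W_u\Vert^2 \leq 2\Vert U_u\Vert^2 + 2\Vert e^{-\Theta u}Z\Vert^2$. Collecting these contributions into $E(T)$ yields the claimed representation.

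For the second part of the lemma, the representation gives
\[
\hat\gamma_{r,T}(s) - \gamma(s) = \bigl(\hat\gamma_{U,T}(s) - \gamma(s)\bigr) + E(T) - (\hat b_T - b)(\hat b_T - b)^\top.
\]
The first summand tends to zero uniformly in $s \in [0,t]$ by assumption, the second is uniformly $o_P(1)$, and the third does not depend on $s$ and tends to zero in probability by the continuous mapping theorem applied to $\hat b_T \to b$. The main obstacle I foresee is precisely the uniform-in-$s$ control of the three cross terms above, since $s$ enters inside the stochastic integrand $U_{s+v}$ and one cannot pull a supremum across the integral naively; the saving feature is the deterministic exponential envelope in $v$, which reduces uniformity to a uniform-in-$s$ second-moment estimate followed by Markov's inequality.
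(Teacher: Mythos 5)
Your proposal is correct and follows essentially the same route as the paper: substitute the decomposition of Lemma~\ref{lemma:r-U_link} into the bilinear form, bound the exponentially damped cross terms involving $r_0-U_0-b$ via Markov's inequality, and identify the surviving $b$-terms with $-(\hat b_T-b)(\hat b_T-b)^\top$ up to $O_P(1/T)$. One small caution: since $r_0$ is only assumed square-integrable, your claimed \emph{second}-moment bound of order $1/T^2$ for the $Z$--$U$ cross terms is not directly available; a first-moment bound of order $1/T$ followed by Markov's inequality (which is what the paper uses) suffices and yields the same $O_P(1/T)$ conclusion, uniformly in $s\in[0,t]$.
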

\begin{proof}
Plugging in \eqref{eq:r-U_link} to the definition of $\hat{\gamma}_{r,T}(s)$ gives
\begin{align*}
    \hat\gamma_{r,T}(s) = \frac{1}{T}\int_0^T \left[e^{-\Theta (v+s)}(r_0-U_0-b) + b +U_{v+s}\right]\left[e^{-\Theta v}(r_0-U_0-b) + b +U_{v}\right]^\top dv - \hat{b}_T\hat{b}_T^\top.
\end{align*}
Now, we first bound the terms involving the stationary process $U_v$ or $U_{v+s}$. For the term
\begin{equation*}
    \frac{1}{T}\int_0^T e^{-{\Theta(s+v)}} (r_0 - U_0 - b) U_v^\top dv,
\end{equation*}
note that 
\begin{align*}
    \E [\Vert (r_0 - U_0 - b) U_v^\top \Vert] &\leq \E [\Vert r_0 \Vert \Vert U_v \Vert ] + \E [\Vert U_0 \Vert \Vert U_v \Vert] + \E[\Vert b \Vert \Vert U_v \Vert] \\
    &\leq \sqrt{\E[\Vert r_0\Vert^2]\E[\Vert U_v \Vert ^2]} + \sqrt{\E[\Vert U_0 \Vert^2] \E[\Vert U_v \Vert^2]} + \sqrt{\E[\Vert b \Vert^2] \E[\Vert U_v \Vert^2]} \\
    & \leq C.
\end{align*}
Thus, by Markov's inequality, we have that, for any $\epsilon>0$,
\begin{align*}
    \PR \left(\left\Vert \int_0^T e^{-{\Theta(s+v)}} (r_0 - U_0 - b) U_v^\top dv \right\Vert > M \right) 
    &\leq \frac{1}{M} \E[ \int_0^T \Vert e^{-{\Theta(s+v)}} \Vert \Vert (r_0 - U_0 - b) U_v^\top \Vert dv ] \\
    &\leq \frac{1}{M} \int_0^T e^{-\lambda_{\text{min}}(s+v)} \E [\Vert (r_0 - U_0 - b) U_v^\top \Vert] dv \\
    &\leq \frac{C}{M \lambda_{\text{min}}} e^{-\lambda_{\text{min}}s} \\
    &< \epsilon,
\end{align*}
for $M > \frac{C}{\epsilon \lambda_{\text{min}}} e^{-\lambda_{\text{min}}s}$. Hence
\begin{equation*}
    \frac{1}{T}\int_0^T e^{-{\Theta(s+v)}} (r_0 - U_0 - b) U_v^\top dv = O_P\left(\frac{1}{T}\right).
\end{equation*}
The term
\begin{equation*}
    \frac{1}{T}\int_0^T U_{v+s} (r_0 - U_0 - b)^\top e^{-{\Theta v}}dv
\end{equation*}
is bounded by the same argument. 

For the cross terms of $b$ and $U_v$, we can use Lemma \ref{lemma:mean} to get
\begin{align*}
    \frac{1}{T} \int_0^T bU_v^\top dv &= -b(b-\hat b_T)^\top + O_P(1/T)  \text{ and}\\
    \frac{1}{T} \int_0^T U_{s+v}b^\top dv &= -(b-\hat b_T)b^\top + O_P(1/T).
\end{align*}
Since
\begin{align*}
    bb^\top - \hat b_T \hat b_T^\top &= (b+\hat b_T)(b-\hat b_T)^\top + b\hat b_T^\top - \hat b_T b^\top \\
    &= (2b+\hat b_T-b)(b-\hat b_T)^\top + b\hat b_T^\top - \hat b_T b^\top \\
    &= 2b(b-\hat b_T)^\top - (\hat b_T-b)(\hat b_T-b)^\top + b\hat b_T^\top - \hat b_T b^\top,
\end{align*}
we have
\begin{align*}
    \frac{1}{T} \int_0^T bU_v^\top dv + \frac{1}{T} \int_0^T U_{s+v}b^\top dv + bb^\top - \hat b_T \hat b_T^\top \\
    = -(\hat b_T-b)(\hat b_T -b)^\top + O_P(1/T).
\end{align*}
Treating the remaining terms as in Lemma \ref{lemma:mean} completes the proof of the first part of the claim. The second part of the claim follows directly from 
\begin{align*}
    \Vert \hat\gamma_{r,T}(s)-\gamma(s)\Vert \leq \left\Vert \frac{1}{T}\int_0^T U_{v+s}U^\top_vdv - \gamma(s)\right\Vert + \Vert \hat b_T - b\Vert^2 +  O_P(1/T).
\end{align*}
This completes the whole proof.

\end{proof}

\begin{lemma}
\label{lemma:sigma}
   Under Assumption \cref{ass:consistency} parts (iii)-(iv), we have, as $N\to \infty$, that
    \begin{equation*}
        V(1/N)^{-1} \frac{1}{N}\sum_{k=0}^{N-1} \left(r_{t_{k+1}} - r_{t_k}\right)\left(r_{t_{k+1}} - r_{t_k}\right)^\top \rightarrow \sigma \sigma^\top
    \end{equation*}
    in probability.
 \end{lemma}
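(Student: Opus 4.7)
The plan is to decompose each increment $\Delta r_k := r_{t_{k+1}} - r_{t_k}$ into a pure-noise contribution and a small drift correction, show that the normalized sum of squared noise contributions converges to $\sigma\sigma^\top$, and control the remaining terms via Assumption~\ref{ass:consistency}. Integrating~\eqref{eq:model} over $[t_k, t_{k+1}]$ yields
\begin{equation*}
\Delta r_k = \sigma \Delta X_k + \xi_k, \qquad \xi_k := \int_{t_k}^{t_{k+1}} \Theta(b - r_s)\, ds.
\end{equation*}
Since $r_0$ is square-integrable and $X$ has finite second moments, the explicit form in Lemma~\ref{lemma:sol} gives $\sup_{s\in[0,1]} \E \Vert r_s \Vert^2 < \infty$, and Cauchy--Schwarz then yields $\Vert \xi_k \Vert_{L^2} = O(1/N)$ uniformly in $k$. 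Expanding $\Delta r_k \Delta r_k^\top$ produces a main quadratic term $\sigma \Delta X_k \Delta X_k^\top \sigma^\top$, two cross terms, and a small $\xi_k \xi_k^\top$ term, each treated separately after premultiplication by $V(1/N)^{-1}$ and averaging.

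For the main term, the crucial observation is that $\sigma$ and $V(1/N)^{-1}$ are both diagonal, so they commute, and
\begin{equation*}
V(1/N)^{-1}\frac{1}{N}\sum_{k=0}^{N-1} \sigma \Delta X_k \Delta X_k^\top \sigma^\top = \sigma V(1/N)^{-1} S_N \sigma^\top, \qquad S_N := \frac{1}{N}\sum_{k=0}^{N-1} \Delta X_k \Delta X_k^\top.
\end{equation*}
Stationarity of the increments gives $\E[S_N] = V(1/N)$, so the expectation equals $\sigma\sigma^\top$. For the variance, writing $V(1/N)^{-1}(S_N - V(1/N)) = V(1/N)^{-1} \cdot \frac{1}{N}\sum_k M_k$ with $M_k := \Delta X_k \Delta X_k^\top - V(1/N)$, the second moments of the entries reduce to sandwich expressions of $V(1/N)^{-1}$ against $\frac{1}{N^2} \sum_{j,k} f(j,k,N)$, which vanish by Assumption~\ref{ass:consistency}(iv). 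Thus the main term converges to $\sigma\sigma^\top$ in $L^2$ and hence in probability.

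The remaining three contributions are shown to be $o_P(1)$ using Assumption~\ref{ass:consistency}(iii). The $\xi_k \xi_k^\top$ term is bounded in operator norm by $\Vert V(1/N)^{-1}\Vert \cdot O(1/N^2)$, which vanishes by (iii). For the cross term $V(1/N)^{-1} \frac{1}{N}\sum_k \sigma \Delta X_k \xi_k^\top$, diagonality of $V(1/N)$ reduces the $(a,b)$-entry to $\sigma_a V(1/N)_{aa}^{-1} \cdot \frac{1}{N}\sum_k \Delta X_k^{(a)} \xi_k^{(b)}$; Cauchy--Schwarz, together with $\E (\Delta X_k^{(a)})^2 = V(1/N)_{aa}$ and $\Vert \xi_k \Vert_{L^2} = O(1/N)$, yields an $L^1$ bound of order $1/\bigl(N \sqrt{V(1/N)_{aa}}\bigr)$, which also vanishes by (iii), as (iii) implies $N^2 V(1/N)_{aa} \to \infty$ for every~$a$.

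The main obstacle is identifying the clean reduction of the variance of the main term to the sandwich form in Assumption~\ref{ass:consistency}(iv). Exploiting the commutativity of the diagonal matrices $\sigma$ and $V(1/N)^{-1}$ is essential: without it, the variance computation would carry extra $\sigma$-factors inside the sandwich, obscuring the connection to the hypothesis. Once this structural observation is made and the decomposition $\Delta r_k = \sigma \Delta X_k + \xi_k$ is in place, the remaining estimates for the error and cross terms reduce to routine Cauchy--Schwarz arguments combined with Assumption~\ref{ass:consistency}(iii).
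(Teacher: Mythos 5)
Your proposal is correct and follows essentially the same route as the paper: the same decomposition $\Delta r_k = \sigma\Delta X_k + \xi_k$, the same use of commutativity of the diagonal matrices $\sigma$ and $V(1/N)^{-1}$ to reduce the variance of the main term to the sandwich expression in Assumption~\ref{ass:consistency}(iv), and the same use of part (iii) to kill the drift and cross terms. The only (immaterial) difference is that you bound the drift contribution via uniform $L^2$ estimates on $r_s$, whereas the paper uses the pathwise almost-sure supremum $C_\infty=\sup_k\sup_{v\in[t_k,t_{k+1}]}\Vert\Theta(b+r_v)\Vert$ together with $O_P$ bounds.
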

\begin{proof}
    We start by showing that
    \begin{align*}
        V(1/N)^{-1} \frac{1}{N}\sum_{k=0}^{N-1} \left(r_{t_{k+1}} - r_{t_k}\right)\left(r_{t_{k+1}} - r_{t_k}\right)^\top &\\
        = V(1/N)^{-1}\frac{1}{N} \sigma \sum_{k=0}^{N-1} \left(X_{t_{k+1}} - X_{t_k}\right)\left(X_{t_{k+1}} - X_{t_k}\right)^\top \sigma^\top 
        &+ O_P\left(\frac{1}{N \sqrt{\min_i \E[X_{{1/N}, i}^2]}}\right).
    \end{align*}
For this, notice that
$$
r_{t_{k+1}} - r_{t_k} =  \int_{t_k}^{t_{k+1}} \Theta (b - r_v) dv + \sigma(X_{t_{k+1}} - X_{t_k})
$$
providing 
\begin{align}
     (r_{t_{k+1}} - r_{t_k})(r_{t_{k+1}} - r_{t_k})^\top & = \sigma  \left(X_{t_{k+1}} - X_{t_k}\right)\left(X_{t_{k+1}} - X_{t_k}\right)^\top \sigma^\top  \nonumber\\
     &+ \int_{t_k}^{t_{k+1}} \Theta (b - r_v) dv \left( \int_{t_k}^{t_{k+1}} \Theta (b - r_v) dv \right)^\top \label{eq:sigma_1}\\
     &+\sigma (X_{t_{k+1}} - X_{t_k}) \left(\int_{t_k}^{t_{k+1}} \Theta (b - r_v) dv\right) ^\top \label{eq:sigma_2}\\
     &+ \int_{t_k}^{t_{k+1}} \Theta (b - r_v) dv \left(X_{t_{k+1}} - X_{t_k}\right)^\top \sigma^\top. \label{eq:sigma_3}
\end{align}
In addition, we have
\begin{align*}
     \Vert r_{t_{k+1}} - r_{t_k}\Vert & \leq \sup_{v \in [t_k, t_{k+1}]}\Vert \Theta( b + r_v) \Vert (t_{k+1} - t_k) + \Vert \sigma(X_{t_{k+1}} - X_{t_k}) \Vert,
\end{align*}
where $C_k = \sup_{v \in [t_k, t_{k+1}]}\Vert \Theta( b + r_v) \Vert < \infty$ almost surely.
Since $C_\infty = \sup_{k\in\{0, N-1\}} C_k < \infty$ almost surely, it now follows that
\begin{align*}
    &\left\Vert V(1/N)^{-1} \frac{1}{N}  \sum_{k=0}^{N-1} \int_{t_k}^{t_{k+1}} \Theta (b - r_v) dv \left( \int_{t_k}^{t_{k+1}} \Theta (b - r_v) dv \right)^\top \right\Vert \\
    &\leq \frac{1}{N}\Vert V(1/N)^{-1} \Vert \sum_{k=0}^{N-1} C_k^2 (t_{k+1} - t_k)^2 \\
    &\leq \frac{1}{N}\Vert V(1/N)^{-1} \Vert (\sup_{k\in\{0, N-1\}} C_k)^2 \sum_{k=0}^{N-1} (\frac{k+1}{N} - \frac{1}{N})^2 \\
    &= \frac{\Vert V(1/N)^{-1} \Vert}{N^2} C_\infty^2 =O_P\left(\frac{1}{N^2 \min_i \E[X_{{1/N}, i}^2]}\right)\\
    &= O_P\left(\frac{1}{N \sqrt{\min_i \E[X_{{1/N}, i}^2]}}\right).
\end{align*}
This bound handles the sum related to Term \eqref{eq:sigma_1}. 

Consider next the sum related to Term \eqref{eq:sigma_2}. 
Since $X_t$ has stationary increments, we have
$$\E \left[ \Vert V(1/N)^{-1} \left(X_{t_{k+1}} - X_{t_k}\right) \Vert\right] \leq \sqrt{\sum_{i=1}^d \E[(X_{1/N}- X_0)_i^2]^{-1}} \sim \sqrt{\Vert V(1/N)^{-1} \Vert}.$$
Hence
\begin{align*}
    &\left\Vert V(1/N)^{-1} \frac{1}{N} \sum_{k=0}^{N-1} \sigma (X_{t_{k+1}} - X_{t_k}) \left(\int_{t_k}^{t_{k+1}} \Theta (b - r_v) dv\right) ^\top \right\Vert \\
    &\leq \frac{\Vert \sigma \Vert}{N}\sum_{k=0}^{N-1} \left\Vert V(1/N)^{-1} (X_{t_{k+1}} - X_{t_k}) \left(\int_{t_k}^{t_{k+1}} \Theta (b - r_v) dv\right) ^\top \right\Vert\\
    &\leq \frac{1}{N} \frac{\Vert \sigma \Vert C_\infty}{N} \sum_{k=0}^{N-1} \Vert V(1/N)^{-1} \left(X_{t_{k+1}} - X_{t_k}\right) \Vert\\
    &=O_P\left(\frac{1}{N \sqrt{\min_i \E[X_{{1/N}, i}^2]}}\right).
\end{align*}
The sum related to Term \eqref{eq:sigma_3} can be treated with the same argumentation. Thus, as $\sigma$ and $V(1/N)^{-1}$ commute, it suffices to prove that
 \begin{equation*}
        V(1/N)^{-1} \frac{1}{N}\sum_{k=0}^{N-1} \left(X_{t_{k+1}} - X_{t_k}\right)\left(X_{t_{k+1}} - X_{t_k}\right)^\top \rightarrow I.
    \end{equation*}
For this, note that
    \begin{align}
        &\frac{V(1/N)^{-1}}{N}\sum_{k=0}^{N-1} \left(X_{t_{k+1}} - X_{t_k}\right)\left(X_{t_{k+1}} - X_{t_k}\right)^\top \nonumber\\&=
        \frac{V(1/N)^{-1}}{N}\sum_{k=1}^{N} \left( \Delta X_{t_k} \Delta X_{t_k}^\top - \E\left[\Delta X_{t_k} \Delta X_{t_k}^\top\right] \right) \label{eq:expectation1}\\
        & + \frac{V(1/N)^{-1}}{N}\sum_{k=0}^{N-1} \E\left[\left(X_{t_{k+1}} - X_{t_k}\right)\left(X_{t_{k+1}} - X_{t_k}\right)^\top\right]\label{eq:expectation2}.
    \end{align}
 Here 
 \begin{small}
    \begin{align*}
    &\mathbb{E}\left[\frac{V(1/N)^{-1}}{N}\sum_{k=1}^{N} \left( \Delta X_{t_k} \Delta X_{t_k}^\top - \E[\Delta X_{t_k} \Delta X_{t_k}^\top] \right) \left(\frac{V(1/N)^{-1}}{N}\sum_{k=1}^{N} \left( \Delta X_{t_k} \Delta X_{t_k}^\top - \E[\Delta X_{t_k} \Delta X_{t_k}^\top] \right)\right)^\top \right]\\
    &= \frac{V(1/N)^{-1}}{N^{2}}\sum_{k=1}^N \sum_{j=1}^N f(j,k,N)V(1/N)^{-1}, 
    \end{align*}
    \end{small}
    
\noindent and hence \eqref{eq:expectation1} converges to zero by assumption. To conclude the proof, note that, for \eqref{eq:expectation2}, we have 
    \begin{align*}
        &\frac{V(1/N)^{-1}}{N}\sum_{k=0}^{N-1} \E\left[\left(X_{t_{k+1}} - X_{t_k}\right)\left(X_{t_{k+1}} - X_{t_k}\right)^\top\right] \\
        &= V(1/N)^{-1}\frac{N}{N} \E\left[(\left(X_{t_1} - X_{t_0}\right)\left(X_{t_1} - X_{t_0}\right)^\top\right] 
        = V(1/N)^{-1}\E\left[X_{\frac{1}{N}} X_{\frac{1}{N}} ^\top\right] 
        \sim I.
    \end{align*}
\end{proof}

\begin{lemma}
\label{lemma:gamma_lim}
    If,  for some functions $l_1(T)$ and $l_2(T)$, we have that
    \begin{align*}
        l_1(T) (\hat b_T - b) &\xrightarrow{d} Y_1 \textrm{ and} \\
        l_2(T) (\hat \gamma_{U,T}(s) - \gamma_U(s)) &\xrightarrow{d} Y_2(s) \textrm{ uniformly in } s\in[0,t]
    \end{align*}
    as $T\to\infty$, then
    \begin{equation*}
    \begin{cases}
        l_2(T) (\hat \gamma_{r,T}(s) - \gamma_U(s)) \xrightarrow{d} Y_1 Y_1^\top + Y_2(s), &\text{ if } l_1^2(T) = l_2(T),\\
        l_2(T) (\hat \gamma_{r,T}(s) - \gamma_U(s)) \xrightarrow{d} Y_2(s), &\text{ if } l_1^2(T) > l_2(T),\\
        l_1^2(T) (\hat \gamma_{r,T}(s) - \gamma_U(s)) \xrightarrow{d} Y_1 Y_1^\top, &\text{ if } l_1^2(T) < l_2(T),
    \end{cases}
    \end{equation*}
    uniformly in $s\in[0,t]$.
\end{lemma}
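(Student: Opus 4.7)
The plan is to read off the result directly from Lemma \ref{lemma:gamma}. That lemma provides the decomposition
$$
\hat\gamma_{r,T}(s) - \gamma(s) = \bigl(\hat\gamma_{U,T}(s) - \gamma(s)\bigr) - (\hat b_T - b)(\hat b_T - b)^\top + O_P(1/T),
$$
where, on inspecting that proof, the $O_P(1/T)$ error is uniform in $s \in [0,t]$: every Markov-type bound there carries a prefactor $e^{-\lambda_{\min}s} \leq 1$, and the purely deterministic pieces ($\tilde E(T)$ coming from Lemma \ref{lemma:mean} and the initial-condition terms) do not depend on $s$. Hence the asymptotic distribution of $\hat\gamma_{r,T}(s) - \gamma(s)$ is governed entirely by the two stochastic pieces on the right-hand side, and the problem reduces to a Slutsky-type comparison of rates.

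First I would invoke the continuous mapping theorem on the outer-product map $x \mapsto xx^\top$ to conclude that $l_1^2(T)(\hat b_T - b)(\hat b_T - b)^\top \xrightarrow{d} Y_1 Y_1^\top$. Then, in each of the three cases, I would multiply the decomposition above by the appropriate rate and track which term is negligible. In the balanced case $l_1^2(T) = l_2(T)$, both leading terms survive after multiplying by $l_2(T)$, and the residual $l_2(T) \cdot O_P(1/T)$ vanishes provided $l_2(T) = o(T)$, which is automatic for the rates that arise in practice (at most of order $\sqrt{T}$); Slutsky's theorem together with joint convergence of the two leading pieces then delivers the stated limit. In the regime $l_1^2(T) > l_2(T)$, scaling by $l_2(T)$ turns the bias-squared contribution into $\bigl(l_2(T)/l_1^2(T)\bigr) \cdot l_1^2(T)(\hat b_T - b)(\hat b_T - b)^\top$, a product of a deterministic factor tending to zero and a tight random object, so only $Y_2(s)$ survives. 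Symmetrically, when $l_1^2(T) < l_2(T)$, scaling by $l_1^2(T)$ kills the $\hat\gamma_{U,T}$ contribution via $\bigl(l_1^2(T)/l_2(T)\bigr) \cdot l_2(T)(\hat\gamma_{U,T}(s) - \gamma(s))$, leaving the $Y_1 Y_1^\top$ limit.

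The main obstacle I anticipate is not analytical but bookkeeping: ensuring that uniformity in $s \in [0,t]$ survives every step. The bias-squared term is $s$-free, and the $O_P(1/T)$ error is uniform in $s$ as noted above, so the uniform-in-$s$ hypothesis on $l_2(T)(\hat\gamma_{U,T}(\cdot) - \gamma(\cdot))$ transfers cleanly to the stated uniform-in-$s$ conclusion after applying Slutsky pointwise and then taking a supremum. Joint convergence of $l_1(T)(\hat b_T - b)$ with $l_2(T)(\hat\gamma_{U,T}(s) - \gamma(s))$ is genuinely needed only in the balanced case, and it is implicit in the joint CLT setup used to state Theorem \ref{thm:CLT}, so no further assumption has to be introduced.
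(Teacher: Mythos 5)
Your proposal is correct and follows essentially the same route as the paper: decompose $\hat\gamma_{r,T}(s)-\gamma(s)$ via Lemma \ref{lemma:gamma}, apply the continuous mapping theorem to obtain $l_1^2(T)(\hat b_T-b)(\hat b_T-b)^\top \xrightarrow{d} Y_1Y_1^\top$, and compare rates case by case with Slutsky. Your added remarks on the uniformity of the $O_P(1/T)$ term, the condition $l_2(T)=o(T)$, and the joint convergence needed in the balanced case are points the paper's proof leaves implicit, but they do not change the argument.
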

\begin{proof}
    We consider the asymptotic behavior as $T\to \infty$. By Lemma \ref{lemma:gamma}, we have
    \begin{equation*}
        \hat \gamma_{r,T}(t) = \hat \gamma_{U,T}(t) + O_P(1/T) -(\hat b_T-b)(\hat b_T-b)^\top
    \end{equation*}
    and by the Continuous mapping theorem, we have
    \begin{equation*}
        l_1^2(T)(\hat b_T-b)(\hat b_T-b)^\top \xrightarrow{d} Y_1 Y_1^\top.
    \end{equation*}
    If $l_1^2(T) = l_2(T)$, both the first and the third term in $\hat \gamma_{r,T}(t)$ contribute, while the second term vanishes and we get
    \begin{equation*}
        l_2(T) (\hat \gamma_{r,T}(t) - \gamma_U(t)) \xrightarrow{d} Y_2(t) + Y_1 Y_1^\top.
    \end{equation*}
    If $l_1^2(T) > l_2(T)$, we must choose $l_2$ as the rate, since otherwise the first term would diverge.
    Now, $l_2(T)(\hat b_T-b)(\hat b_T-b)^\top \rightarrow 0$ and we are left with
    \begin{equation*}
       l_2(T) (\hat \gamma_{r,T}(t) - \gamma_U(t)) \xrightarrow{d} Y_2(t).
    \end{equation*}
    Finally, if $l_1^2(T) < l_2(T)$, we must choose $l_1^2$ as the rate to keep the third term from diverging.
    Then $l_2^1(T)\hat \gamma_{U,T}(t) \to 0$, leading to
    \begin{equation*}
        l_1^2(T) (\hat \gamma_{r,T}(t) - \gamma_U(t)) \xrightarrow{d} Y_1 Y_1^\top.
    \end{equation*}
\end{proof}

\begin{lemma}
\label{lemma:d_lim}
    If,  for some functions $l_1(T)$, $l_2(T)$, $l_3(N(T))$, and $N=N(T)$ with $N\to \infty$, we have that 
    \begin{align*}
        l_1(T) (\hat b_T - b) &\xrightarrow{d} Y_1, \\
        l_2(T) (\hat \gamma_{U,T}(s) - \gamma_U(s)) &\xrightarrow{d} Y_2(s) \text{ uniformly in }s\in[0,t]\text{, and}\\
        l_3(N(T)) (\hat \sigma \hat \sigma^\top- \sigma \sigma^\top) &\xrightarrow{d} Y_3
    \end{align*}
    as $T\to\infty$, then
    \begin{equation*}
        \begin{cases}
            l_3(N(T))(\hat D_t - D_t) \xrightarrow{d} V(t) Y_3, &\text{ if } l_3(N(T))< \min(l^2_1(T),l_2(T)), \\
            l_3(N(T))(\hat D_t - D_t) \xrightarrow{d} V(t)Y_3 - 2Y_2 (0) + Y_2 (t) + Y_2 (-t), &\text{ if } l_3(N(T)) = l_2(T)\leq l_1^2(T),\\
            l_3(N(T))(\hat D_t - D_t) \xrightarrow{d} V(t)Y_3,  &\text{ if } l_3(N(T)) = l_1^2(T)<l_2(T),\\
            l_2(T)(\hat D_t - D_t) \xrightarrow{d} - 2Y_2 (0) + Y_2 (t) + Y_2 (-t), &\text{ if } l_3(N(T)) > l_2(T),l^2_1(T)\geq l_2(T),\\
            l^2_1(T)(\hat D_t - D_t) \xrightarrow{d} 0, &\text{ if } l_3(N(T)) > l^2_1(T), l^2_1(T)<l_2(T).
        \end{cases}
    \end{equation*}
\end{lemma}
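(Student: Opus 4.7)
The plan is to decompose $\hat D_t - D_t$ into two parts that are controlled by the independently assumed rates on $\hat\sigma$ and $\hat\gamma_{r,T}$, and then combine them case by case. Writing $\mathrm{Cov}(U_t-U_0) = 2\gamma(0) - \gamma(t) - \gamma(-t)$ and applying the same identity at the level of estimators with $\hat\gamma_{r,T}$ in place of $\gamma$, we get
\[
\hat D_t - D_t = \bigl(\hat\sigma V(t)\hat\sigma^\top - \sigma V(t)\sigma^\top\bigr) + \bigl(2\gamma(0) - \gamma(t) - \gamma(-t)\bigr) - \bigl(2\hat\gamma_{r,T}(0) - \hat\gamma_{r,T}(t) - \hat\gamma_{r,T}(-t)\bigr).
\]
Call the first summand $A_T$ and the remaining ones (grouped) $B_T$. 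The strategy is to identify the weak limit of each at its natural rate, then read off $\hat D_t - D_t$ by Slutsky in each of the five regimes.

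For $A_T$, since $X$ has uncorrelated components, the cross-covariance matrix $V(t)$ is diagonal, and the (symmetric, positive definite) matrices $\sigma$ and $\hat\sigma$ commute with it, so $A_T = V(t)(\hat\sigma\hat\sigma^\top - \sigma\sigma^\top)$. The assumption $l_3(N(T))(\hat\sigma\hat\sigma^\top - \sigma\sigma^\top) \xrightarrow{d} Y_3$ together with the continuous mapping theorem yields $l_3(N(T)) A_T \xrightarrow{d} V(t) Y_3$. For $B_T$, Lemma~\ref{lemma:gamma_lim} delivers the needed joint limits of $\hat\gamma_{r,T}(s) - \gamma(s)$ at $s = 0, t, -t$. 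When $l_1^2(T) \ge l_2(T)$, evaluating the second branch of Lemma~\ref{lemma:gamma_lim} at these three points gives $l_2(T) B_T \xrightarrow{d} -2Y_2(0) + Y_2(t) + Y_2(-t)$. When $l_1^2(T) < l_2(T)$, the third branch of Lemma~\ref{lemma:gamma_lim} gives, at each $s$, the same $s$-independent limit $Y_1 Y_1^\top$, and the crucial cancellation $-2Y_1 Y_1^\top + Y_1 Y_1^\top + Y_1 Y_1^\top = 0$ forces $l_1^2(T) B_T \xrightarrow{d} 0$.

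The five cases in the statement are now obtained by matching the slower rate between $A_T$ and $B_T$ against the listed conditions. If $l_3(N(T)) < \min(l_1^2(T), l_2(T))$, the rate $l_3(N(T))$ keeps $A_T$ tight while forcing $B_T$ (at either rate $l_2$ or $l_1^2$) to vanish, giving $V(t) Y_3$. If $l_3(N(T)) = l_2(T) \le l_1^2(T)$, both $A_T$ and the second-branch limit of $B_T$ contribute at the same rate, yielding the sum $V(t) Y_3 - 2Y_2(0) + Y_2(t) + Y_2(-t)$. If $l_3(N(T)) = l_1^2(T) < l_2(T)$, then $A_T$ contributes at its natural rate while the third-branch limit of $B_T$ vanishes by the cancellation above, yielding $V(t) Y_3$. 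If $l_3(N(T)) > l_2(T)$ and $l_1^2(T) \ge l_2(T)$, the dominant rate is $l_2(T)$: $A_T$ vanishes at this rate and $B_T$ produces the covariance limit. Finally, if $l_3(N(T)) > l_1^2(T)$ and $l_1^2(T) < l_2(T)$, both summands vanish at rate $l_1^2(T)$ by the cancellation.

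The main obstacle is not any single estimate but the organized case analysis in the last step, in particular verifying that in the regime $l_1^2(T) < l_2(T)$ the three-point sum for $B_T$ really telescopes to zero. This relies on the key fact from Lemma~\ref{lemma:gamma_lim} that when the mean-estimator error dominates, the induced limit of $\hat\gamma_{r,T}(s) - \gamma(s)$ is independent of $s$ (being $\pm Y_1 Y_1^\top$), which is exactly what causes the linear combination $-2(\cdot) + (\cdot) + (\cdot)$ defining $B_T$ to collapse. Once this is clear, the rest is Slutsky-type bookkeeping on the relative sizes of $l_1^2, l_2, l_3$.
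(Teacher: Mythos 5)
Your proposal is correct and follows essentially the same route as the paper: the same decomposition of $\hat D_t - D_t$ into the $V(t)(\hat\sigma\hat\sigma^\top - \sigma\sigma^\top)$ part and the three-point covariance combination, the same appeal to Lemma~\ref{lemma:gamma_lim}, the same cancellation of the $Y_1Y_1^\top$ terms in $-2(\cdot)(0)+(\cdot)(t)+(\cdot)(-t)$, and the same case bookkeeping on $l_1^2, l_2, l_3$. The only slip is attributing the case $l_1^2(T)=l_2(T)$ to the second branch of Lemma~\ref{lemma:gamma_lim} rather than the first, but since the extra $Y_1Y_1^\top$ contributions cancel in the linear combination, the stated limit is unaffected.
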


\begin{proof}
    The estimator for $D_t$ is
    \begin{equation*}
        \hat D_t = V(t) \hat \sigma \hat \sigma^\top - 2 \hat \gamma_{r,T}(0) + \hat \gamma_{r,T}(t) + \hat \gamma_{r,T}(-t)
    \end{equation*}
    and therefore the difference 
    \begin{align*}
        \hat D_t - D_t = V(t) (\hat \sigma \hat \sigma^\top - \sigma \sigma^\top) - 2 (\hat \gamma_{r,T}(0) - \gamma_U(0)) \\+ (\hat \gamma_{r,T}(t)- \gamma_U(t)) + (\hat \gamma_{r,T}(-t) - \gamma_U(-t)).
    \end{align*}
    
    We now consider the difference $\hat D_t - D_t$ in the three following cases:
    \begin{enumerate}
        \item $l_3(N(T)) < l_1^2(T)$ and $l_3(N(T)) < l_2(T)$,
        \item $l_3(N(T)) = \min(l_1^2(T), l_2(T))$,
        \item $l_3(N(T)) > \min(l_1^2(T), l_2(T))$.
    \end{enumerate}
    
    In the case (i), we have $l_3(N(T)) < l_1^2(T)$ and $l_3(N(T)) < l_2(T)$, making all but the first term vanish giving us
    \begin{equation*}
        l_3(N(T))(\hat D_t - D_t) \xrightarrow{d} V(t) Y_3.
    \end{equation*}
    
    In the case (ii), we have $l_3(N(T)) = \min(l_1^2(T), l_2(T))$.
    If $l_1^2(T) = l_2(T)$, using the assumptions together with Lemma \ref{lemma:gamma_lim} gives
    \begin{align*}
        l_2(T)(\hat D_t - D_t) \xrightarrow{d} &V(t) Y_3 - 2(Y_2 (0) + Y_1 Y_1^\top)+ (Y_2 (t) + Y_1 Y_1^\top) + (Y_2 (-t) + Y_1 Y_1^\top) \\
        =& V(t) Y_3 - 2Y_2 (0) + Y_2 (t) + Y_2 (-t).
    \end{align*}
    Similarly, if $l_1^2(T) > l_2(T)$, we have
    \begin{equation*}
        l_2(T)(\hat D_t - D_t) \xrightarrow{d} V(t) Y_3 - 2Y_2(0) + Y_2 (t) + Y_2 (-t).
    \end{equation*}
    If $l_1^2(T) < l_2(T)$, we have
    \begin{align*}
        l_1^2(T)(\hat D_t - D_t) \xrightarrow{d} & V(t) Y_3 - 2Y_1 Y_1^\top + Y_1 Y_1^\top + Y_1 Y_1^\top 
        =V(t) Y_3.
    \end{align*}
   
    In the case (iii), we have $l_3(N(T)) > \min(l_1^2(T), l_2(T))$ so the first term in $\hat D_t - D_t$ vanishes as the rate function is chosen to be $\min(l_1^2(T), l_2(T))$.
    Using the assumptions and Lemma \ref{lemma:gamma_lim} similarly as in the case (ii) yields the result.
\end{proof}
\begin{remark}
    Uniform convergence of $\hat \gamma_{U,T}$ is not required for Lemma \ref{lemma:d_lim} as the estimate $\hat D_t$ only uses fixed values of $\hat \gamma_{U,T}$.
    However, as uniform convergence is required for the consistency of other coefficient matrix estimators, we keep it here as well.
\end{remark}
\subsection{Proof of Theorem \ref{thm:consistency}}
\label{sec:cons_proof}
\begin{proof}[Proof of Theorem \ref{thm:consistency}]
Consistency of $\hat b_T$ follows from Lemma \ref{lemma:mean}.
The consistency of $\hat \sigma \hat \sigma^\top$ is the result of Lemma \ref{lemma:sigma}.
As the square root function is continuous, we can use the Continuous mapping theorem to obtain, as $T\to\infty$, that
\begin{equation*}
    \hat \sigma \to \sigma
\end{equation*}
in probability. Lemma \ref{lemma:gamma} ensures that we have $\hat \gamma_{r,T} \to \gamma$ uniformly (in $s\in[0,t]$) in probability as $T \to \infty$ whenever $\hat \gamma_{U,T} \to \gamma$ uniformly. As $V(t)$ was assumed to be known, this shows that the estimators $\hat B_t, \hat C_t$ and $\hat D_t$ are consistent. We can now use \cite[Theorem 3]{Marko1} to get, as $T\to\infty$, that
\begin{equation*}
    \hat \Theta \to \Theta
\end{equation*}
in probability. This concludes the proof.
\end{proof}

\subsection{Proof of Theorem \ref{thm:CLT}}
\label{sec:clt_proof}

\begin{proof}
Some of the elements of our proof are inspired by the proof of \cite[Theorem 4]{Marko1}. 

\textbf{Proof of item (i):} We have
    \begin{align*}
        \Delta_T C = \hat C_t - C_t &= \int_0^t (t-s)(\hat \gamma_{r,T}(s) - \gamma(s)) ds + \int_0^t (t-s) (\hat \gamma_{r,T}(s)^\top - \gamma(s)^\top)ds, \\
        \Delta_T B = \hat B_t - B_t &= \int_0^t (\hat \gamma_{r,T}(s) - \gamma(s))ds + \int_0^t ( \gamma (s)^\top - \hat \gamma_{r,T}(s)^\top)ds, \text{ and} \\
        \Delta_T D = \hat D_t - D_t &= V(t) (\hat \sigma \hat \sigma^\top - \sigma \sigma^\top) -2(\hat \gamma_{r,T}(0) - \gamma(0)) + (\hat \gamma_{r,T}(t) - \gamma(t)) + (\hat \gamma_{r,T}(t)^\top - \gamma(t)^\top).
    \end{align*}
We define a function $L_1$ as follows:
\begin{equation*}
        L_1(\vvec(Y_1 Y_1^\top, Y_2, Y_3)) =
            \begin{pmatrix}
            0 \\
            0 \\
            V(t)Y_3
        \end{pmatrix}, \text{ if } l_3(N(T))< \min(l^2_1(T),l_2(T)),
    \end{equation*}
    \begin{equation*}
        L_1(\vvec(Y_1 Y_1^\top, Y_2, Y_3)) = 
        \begin{pmatrix}
            \int_0^t (t-s) (Y_2 (s) + Y_2 (s)^\top ) ds \\
            \int_0^t (Y_2(s) - Y_2(s)^\top )ds \\
            V(t) Y_3 - 2 Y_2(0) + Y_2(t) + Y_2(-t)
        \end{pmatrix},\text{ if } l_3(N(T)) = l_2(T) < l_1^2(T), 
    \end{equation*}
    \begin{equation*}
        L_1(\vvec(Y_1 Y_1^\top, Y_2, Y_3)) = 
        \begin{pmatrix}
            \int_0^t (t-s) (Y_2 (s) + Y_2 (s)^\top ) ds + t^2 Y_1 Y_1^\top \\
            \int_0^t (Y_2(s) - Y_2(s)^\top )ds \\
            V(t) Y_3 - 2 Y_2(0) + Y_2(t) + Y_2(-t)
        \end{pmatrix},\text{ if } l_3(N(T)) = l_2(T) = l_1^2(T), 
    \end{equation*}
    \begin{equation*}
        L_1(\vvec(Y_1 Y_1^\top, Y_2, Y_3)) = 
        \begin{pmatrix}
            t^2 Y_1 Y_1^\top \\
            0 \\
            V(t)Y_3
        \end{pmatrix},\text{ if } l_3(N(T)) = l_1^2(T) < l_2(T), 
    \end{equation*}
    \begin{equation*}
        L_1(\vvec(Y_1 Y_1^\top, Y_2, Y_3)) = 
        \begin{pmatrix}
            \int_0^t (t-s) (Y_2 (s) + Y_2 (s)^\top ) ds \\
            \int_0^t (Y_2(s) - Y_2(s)^\top )ds \\
            - 2 Y_2(0) + Y_2(t) + Y_2(-t)
        \end{pmatrix},\text{ if } l_3(N(T)) > l_2(T), l_2(T) < l_1^2(T), 
    \end{equation*}
    \begin{equation*}
        L_1(\vvec(Y_1 Y_1^\top, Y_2, Y_3)) = 
        \begin{pmatrix}
            \int_0^t (t-s) (Y_2 (s) + Y_2 (s)^\top ) ds + t^2 Y_1 Y_1^\top\\
            \int_0^t (Y_2(s) - Y_2(s)^\top )ds \\
            - 2 Y_2(0) + Y_2(t) + Y_2(-t)
        \end{pmatrix},\text{ if } l_3(N(T)) > l_2(T)= l_1^2(T), 
    \end{equation*}
    and
    \begin{equation*}
        L_1(\vvec(Y_1 Y_1^\top, Y_2, Y_3)) = 
        \begin{pmatrix}
            t^2 Y_1 Y_1^\top\\
            0\\
            0
        \end{pmatrix},\text{ if } l_3(N(T)) > l_1^2(T), l_1^2(T)<l_2(T).
    \end{equation*}

    The function $L_1$ is linear and continuous. Using Lemma \ref{lemma:gamma_lim}, Lemma \ref{lemma:d_lim}, and the Continuous mapping theorem yield
    \begin{equation*}
        L_1(l_\Theta(T) \vvec(\hat \gamma_{r,T}(s) - \gamma(s)) = l_\Theta(T)\vvec(\Delta_T C, \Delta_T B, \Delta_T D) \xrightarrow{d} L_1(\vvec(Y_1 Y_1^\top, Y_2, Y_3)),
    \end{equation*}
    where $l_\Theta$ denotes the corresponding rate function. 

    \textbf{Proof of item (ii):} We can write $\Theta$ as a function of $B, C$ and $D$ as
    \begin{equation*}
        \Theta = f(\vvec(B, C, D))
    \end{equation*}
    for some continuously differentiable function $f$. 
    Now the Delta method gives
    \begin{align*}
        l_\Theta(T)\vvec(\hat \Theta - \Theta) &= l_\Theta(T)\left( f(\vvec(\hat B, \hat C, \hat D)) - f(\vvec(B, C, D))\right) \\
        &\xrightarrow{d} \nabla f(\vvec(B, C, D))(L_1(\vvec(Y_1 Y_1^\top, Y_2, Y_3)) \\
        &=: L_2((L_1(\vvec(Y_1 Y_1^\top, Y_2, Y_3))).
    \end{align*}
\end{proof}

\bibliographystyle{plain}
\bibliography{main}
\end{document}